\tikzset{p0/.style={shape=circle,draw,thick,minimum size = 1.2cm}}
\tikzset{p1/.style={shape=rectangle,minimum size = 1.2cm,draw,thick}}
\renewcommand{\epsilon}{\varepsilon}
\newcommand{\F}{\ensuremath{\mathcal{F}}}
\newcommand{\pow}[1]{2^{#1}}
\newcommand{\nats}{\mathbb{N}}
\newcommand{\game}{\mathcal{G}}
\newcommand{\att}[3]{\mathrm{Attr}_{#1}^{#2}(#3)}
\newcommand{\ztree}{\mathcal{Z}_{\F_0, \F_1}}
\DeclareMathOperator{\infi}{\mathrm{Inf}}
\DeclareMathOperator{\occ}{\mathrm{Occ}}
\DeclareMathOperator{\maxscore}{\mathrm{MaxSc}}
\DeclareMathOperator{\play}{\mathrm{Play}}
\DeclareMathOperator{\lbl}{\mathrm{RtLbl}}
\DeclareMathOperator{\child}{\mathrm{Chld}}
\DeclareMathOperator{\branch}{\mathrm{BrnchFctr}}
\DeclareMathOperator{\score}{\mathrm{Sc}}
\DeclareMathOperator{\acc}{\mathrm{Acc}}
\DeclareMathOperator{\ind}{\mathrm{Ind}}
\newtheorem{theorem}{Theorem}
\newtheorem{lemma}[theorem]{Lemma}
\newtheorem{definition}[theorem]{Definition}
\title{Playing Muller Games in a Hurry
\thanks{This work was carried out while the second author visited
the University of Warwick, supported by EPSRC grant EP/E022030/1 and 
the project 
\textit{Games for Analysis and Synthesis of Interactive Computational Systems
(GASICS)} of the \textit{European Science Foundation}.}}
\author{John Fearnley
\institute{Department of Computer Science\\
 University of Warwick, UK}
\email{john@dcs.warwick.ac.uk}
\and 
Martin Zimmermann
\institute{Lehrstuhl Informatik 7\\
 RWTH Aachen University, Germany}
\email{zimmermann@automata.rwth-aachen.de}
}
\begin{document}
\maketitle

\begin{abstract}
This work studies the following question: can plays
in a Muller game be stopped after a finite number of moves
and a winner be declared. A criterion to do this is sound
if Player $0$ wins an infinite-duration Muller game 
if and only if she wins the finite-duration version. 
A sound criterion is presented that stops a play after
at most $3^{n}$ moves, where $n$ is the size of the arena. This improves the bound $(n!+1)^{n}$
obtained by McNaughton and the bound $n!+1$ derived from a 
reduction to parity games.
\end{abstract}

\section{Introduction}

In an infinite game, two players move a token through a finite graph thereby building an infinite path. The winner is determined by a partition of the infinite paths through the arena into the paths that are winning for Player $0$ or winning for Player $1$, respectively. Many winning conditions in the literature depend on the vertices that are visited infinitely often, i.e., the
winner of a play cannot be determined after any finite number of steps. We are interested in the following question: is it nevertheless possible to give a criterion to define a finite-duration variant of an infinite game. Such a criterion has to stop a play after a finite number of steps and then declare a winner based on the finite play constructed thus far. It is sound if Player~$0$ has a winning strategy for the infinite-duration game if and only if she has a winning strategy for the finite-duration game.

McNaughton considered the problem of playing infinite games in finite time from a different perspective. His motivation was to make infinite games suitable for ``casual living room recreation'' \cite{M00}.
As human players cannot play infinitely long, he envisions a referee that stops a
play at a certain time and declares a winner. The justification for declaring
a winner is that ``if the play were to continue with each [player] playing
forever as he has so far, then the player declared to be the winner would be the winner of the infinite play of the game'' \cite{M00}. 

Besides this recreational aspect of infinite games there are several interesting theoretical questions that motivate investigating this problem. If there exists a sound criterion to stop a play after at most $n$ steps, this yields a simple algorithm to determine the winner of the infinite game: the finite-duration game can be seen as a reachability game on a finite tree of depth at most $n$ that is won by the same player that wins the infinite-duration game. There exist simple and efficient algorithms to determine the winner in reachability games on trees. Furthermore, a positive answer to the question whether a winning strategy for the reachability game can be turned into a (small finite-state) winning strategy should yield better results in the average (although not in the worst case) than game reductions, which  ignore the structure of the arena.

Consider the following criterion: the players move the token
through the arena until a vertex is visited for the second time.
An infinite play can then be obtained by assuming that the players continue to
play the cycle that they have constructed. Then, the winner of
the infinite play is declared to be the winner of the finite play. If the game is determined
with positional strategies for both players, then this procedure is correct: if a player has a
winning strategy for the infinite game, which can be assumed to be positional,
then she can use the same strategy to win the finite version of the game
and vice versa. 

Therefore, McNaughton proposes that we should consider games that are
in general not positionally determined. Here, the first loop of a 
play is typically not an indicator of how the infinite play evolves, 
as the memory allows 
a player to make different decisions when a vertex is seen again.
Therefore, the players have to play longer before the play 
can be stopped and analyzed.

McNaughton considers Muller games, which are games of the form $(G,\F_0,\F_1)$, 
where $G$ is a finite arena and $(\F_0,\F_1)$ is a partition of the set of vertices.
Player~$i$ wins a play, if the set of vertices visited infinitely often by
this play is in $\F_i$. Muller winning conditions allow us to express all other 
winning conditions that depend only in the infinity set of a play 
(e.g., B\"uchi, co-B\"uchi, parity, Rabin, and Streett conditions).

To give a sound criterion for Muller games, McNaughton defines
for every set of vertices $F$ a scoring function $\score_F$ that keeps track
of the number of times the set $F$ was visited entirely since the last visit of a vertex
that is not in $F$. In an infinite play, the set of vertices seen infinitely
often is the unique set $F$ such that $\score_F$ will tend to infinity with being reset to $0$ only finitely often.   

%Player~$0$'s winning region in $\game$ contains
%every vertex. However, for every winning strategy $\sigma$ from $1$ we have
%$\maxscore_{\F_1}(\play(1,\sigma,\tau))=2$ for some $\tau\in\Pi_1$, as either the play prefix $1001$
%or $1221$ are consistent with $\sigma$. Hence, the bound $2$ in Theorem~\ref{thm_k3}
%is optimal. 

%However, it is open whether Player~$i$ can, starting in her winning region,
%always be the first player to reach a score of $2$. This is true in the
%example above and would imply that the finite-time Muller game $(G,\F_0,\F_1,2)$ has 
%the same winning regions as a Muller game $(G,\F_0,\F_1)$.

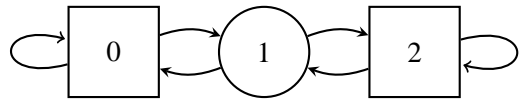
\begin{figure}[h]
\centering
\begin{tikzpicture}[node distance=2cm,auto,every edge/.style={bend angle=20,draw,thick}]
\node[p1]			(0)	{$0$};
\node[p0, right of = 0]		(1)	{$1$};
\node[p1, right of = 1]		(2)	{$2$};

\path[-stealth,thick]
(0)	edge[bend left] (1)
(1)	edge[bend left] (0)
(1)	edge[bend left] (2)
(2)	edge[bend left] (1)
(0)	edge[loop left] ()
(2)	edge[loop right] ();

\end{tikzpicture}
\caption{The arena $G$.}
\label{fig_exbound}
\end{figure}

Let $G$ be the arena in Figure~\ref{fig_exbound} (Player~$0$'s vertices are shown as 
circles and Player~$1$'s vertices are shown as squares) and the Muller game 
$\game=(G,\F_0,\F_1)$ with $\F_0= \{\{0,1,2\},\{0\},\{2\}\}$. 
In the play $100122121$ we have that the score for the set $\{1,2\}$ is $3$,
as it was seen thrice (i.e., with the infixes $12$, $21$, and $21$). Note that the order of
the visits to the elements of $F$ is irrelevant and that it is not required to close
a loop in the arena. The following winning strategy for Player~$0$ 
bounds the scores of Player~$1$ by $2$: arriving from $0$ at $1$ move to $2$ and
vice versa. However, Player~$0$ cannot avoid a score of $2$ for her opponent,
as either the play prefix $1001$ or $1221$ is consistent with every winning strategy.

By using finite-state determinacy of Muller games, McNaughton suggests that the criterion should
stop a play after a score of $|F|!+1$ for some set $F$ is reached. He shows that picking the winner
to be the Player~$i$ such that $F\in\F_i$ is indeed sound.

Applying
finite-state determinacy one can also show that one can soundly declare a
winner after at most $|G|!+1$ steps, as a repetition of a memory state has occurred
after that many steps. Note that for \textit{large} sets $F$, it could take far 
more than $|G|!+1$ steps to reach a score of $|F|!+1$, as scores can increase 
slowly or can even be reset to $0$.
However, to decide whether a memory state repetition has occurred, it might be necessary
 to compute the complete memory structure for the given game, which is of size $|G|!$.
Keeping track of scores is much simpler, as they can be computed on the fly
while the play is being played. Also, there are at most $|G|$ sets $F$ with non-zero score.

\paragraph{Our contribution.} We show that declaring the winner of a play
as soon as the score~$3$ is reached for the first time is a sound criterion. We
complement this by proving that a score of $3$ is reached
after at most $3^{|G|}$ steps. Hence, we
obtain a better bound than $|G|!+1$, which was derived from
waiting for repetitions of memory states.

Our results are obtained by
using Zielonka's algorithm \cite{Z98} (a reinterpretation of an earlier
algorithm by McNaughton \cite{M93})
for computing winning regions in Muller
games. We carefully define a winning strategy that
bounds the scores of the opponent by $2$. 
In the example above, the winning player cannot avoid a
score of $2$ for the opponent. Hence, in this sense our result is optimal.

\paragraph{Related work.} Usually, the quality of a strategy
is measured in terms of memory needed to implement it. However,
there are other natural quality measures of winning strategies.
In \cite{CH09}, the authors study a strengthening of parity (and Streett) 
objectives, which require that there is
some bound between the occurrences of even colors. Another application
of this concept appears in work on request response games~\cite{HT08,Z09}, where 
waiting times between requests and subsequent responses are used
to define the value of a play. There it is shown 
that time-optimal winning strategies can be computed effectively.

The maximal score achieved by the opponent in a play 
can be used to measure the quality of winning plays in a Muller game. 
Player~$0$ prefers plays in which the scores of her
opponent are \textit{small}. This corresponds to not spending a prolonged amount of time in a set of the opponent, but
visiting every vertex that is seen infinitely often without large gaps.\medskip

This paper is structured as follows. Section~\ref{sec_def} contains
basic definitions and fixes our notation. In Section~\ref{sec_ftmg},
we introduce the scoring functions, prove some properties about
scoring and define finite-time Muller games. In Section~\ref{sec_alg}, we
present Zielonka's algorithm which is used in Section~\ref{sec_bound2}
to prove the main result. Section~\ref{sec_conc} ends the paper with
a conclusion and some pointers to further research.

\section{Definitions}
\label{sec_def}

The power set of a set $S$ is denoted by $\pow{S}$ and $\nats$ denotes the non-negative
integers. The prefix relation on words is denoted by $\sqsubseteq$, its strict
version by $\sqsubset$. Given a word $w=xy$, define $x^{-1}w=y$ and $wy^{-1}=x$.

An arena $G=(V,V_0,V_1,E)$ consists of a finite, directed graph $(V,E)$ of vertices and
a partition $(V_0,V_1)$ of $V$ denoting the positions of Player $0$ (drawn as circles)
and Player $1$ (drawn as squares). We require that every
vertex has at least one outgoing edge. A set $X\subseteq V$ induces the subarena
$G[X]=(V\cap X, V_0\cap X,V_1\cap X, E\cap (X\times X))$, if every vertex in $X$
has at least one successor in $X$. A Muller game $\game=(G,\F_0,\F_1)$ consists of an arena $G$ and a partition $(\F_0,\F_1)$ of $\pow{V}$.

A play in $G$ starting in $v\in V$ is an infinite sequence $\rho=\rho_0\rho_1\rho_2\ldots$ such that $\rho_0=v$
and $(\rho_n,\rho_{n+1})\in E$ for all $n\in\nats$. The occurrence set $\occ(\rho)$ and infinity set
$\infi(\rho)$ of $\rho$ are given by $\occ(\rho)=\{v\in V\mid\exists n\in\nats \text{ such that }\rho_n=v\}$
and $\infi(\rho)=\{v\in V\mid\exists^{\omega} n\in\nats \text{ such that }\rho_n=v\}$.
We will also use the occurrence set of a finite play $w$. A play $\rho$ in a Muller game is winning for Player~$i$ if $\infi(\rho)\in\F_i$.

A strategy for Player $i$ is a function $\sigma\colon V^*V_i\rightarrow V$
satisfying $(s,\sigma(ws))\in E$ for all $ws\in V^*V_i$. The play $\rho$ is
consistent with $\sigma$ if $\rho_{n+1}=\sigma(\rho_0\ldots\rho_{n})$ for every
$n\in\nats$ with $\rho_n \in V_i$. The set of strategies for Player~$i$ is denoted by $\Pi_i$.
A strategy is called finite-state, if it can be implemented by an automaton with output
that reads finite plays and outputs the vertex to move to. We will say that
a finite-state strategy is of size $n$, if there exists an automaton 
with $n$ states that implements it.

A strategy $\sigma$ for Player $i$ is a winning strategy from a vertex $v\in V$, if every
play that starts in $v$ and is consistent with $\sigma$ is won by Player $i$. The strategy $\sigma$ is a
winning strategy for a set of vertices $W\subseteq V$, if every play that starts in some $v\in W$ and is
consistent with $\sigma$ is won by Player $i$. The winning region $W_i$ of Player~$i$ contains all vertices,
from which she has a winning strategy. A game is determined if $W_0$ and $W_1$ form a partition of $V$.

\begin{theorem}[\cite{BL69,GH82,M93}]
Muller games are determined with finite-state strategies of size $n\cdot n!$, where $n$ denotes the size of the arena.
\end{theorem}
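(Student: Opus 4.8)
The plan is to reduce the Muller game to a game with a positionally determined winning condition by augmenting the arena with a \emph{latest appearance record} (LAR), following Gurevich and Harrington. A memory state is a pair $(\pi,h)$, where $\pi$ is a linear order of the $n$ vertices of $V$ and $h\in\{1,\dots,n\}$ is a \emph{hit position}. Since there are exactly $n!$ orders and $n$ choices for $h$, the memory has $n\cdot n!$ states. The update upon moving to a vertex $v$ is: locate the position $i$ of $v$ in $\pi$, move $v$ to the front of the order, and record the hit $h=i$. First I would make precise how this automaton runs alongside a play, so that every finite prefix determines a current memory state, and check that the resulting structure is a legitimate finite-state memory of the claimed size.

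Next I would establish the central property of the LAR: along any infinite play $\rho$, let $h^\ast$ be the largest hit position occurring infinitely often. Then the set consisting of the top $h^\ast$ entries of the order stabilises, and this stable set equals $\infi(\rho)$. The argument is an infinitary pigeonhole: every vertex outside $\infi(\rho)$ is eventually pushed permanently towards the back, so after finitely many steps the front of the order only rearranges vertices of $\infi(\rho)$; the maximal recurring hit marks exactly the moments at which the whole of $\infi(\rho)$ has just been refreshed, whence the top $h^\ast$ entries coincide with $\infi(\rho)$ from some point onward. This lemma, isolating the infinity set from the memory state alone, is the heart of the construction and the step I expect to demand the most care.

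With the characterisation in hand, I would equip the product arena $G\times M$ with a condition depending only on the current memory state: a state $(\pi,h)$ is declared good for Player~$0$ if the set of the top $h$ entries of $\pi$ lies in $\F_0$. By the lemma, Player~$0$ wins the Muller play if and only if the largest recurring good index in the product is witnessed by a set in $\F_0$, and this is a Rabin condition over the product (its complement being a Streett condition for Player~$1$). I would then invoke the positional determinacy of Rabin games, which furnishes Player~$0$ with a positional winning strategy on her product winning region. Applying the same reduction with the roles of $\F_0$ and $\F_1$ exchanged expresses Player~$1$'s objective as a Rabin condition on the \emph{same} product arena, yielding Player~$1$ a positional strategy on her region as well.

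Finally I would transfer back. A positional strategy on $G\times M$ reads only the current product vertex, i.e.\ the current arena vertex together with the current memory state, so projecting it gives a strategy on $G$ implemented by the LAR automaton with $n\cdot n!$ states. Because the product winning regions project onto the Muller winning regions (again by the characterisation lemma), the two projected strategies cover a partition of $V$, establishing finite-state determinacy with the stated memory bound. The only nonroutine ingredients are the LAR characterisation lemma and the verification that the induced product condition is genuinely of Rabin form; granting positional determinacy of Rabin games, the remaining transfer is straightforward.
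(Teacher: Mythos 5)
The paper does not prove this classical theorem itself; it cites B\"uchi--Landweber, Gurevich--Harrington, and McNaughton, and your latest-appearance-record construction with hit positions, the stabilisation lemma for the top $h^\ast$ entries, and the reduction to a positionally determined (parity/Rabin) condition on the product arena is precisely the standard argument of those references, yielding the stated $n\cdot n!$ memory bound. Your proposal is correct and takes essentially the same approach as the cited proof.
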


Let $G=(V,V_0,V_1,E)$ be an arena and let $X\subseteq V$ be a set that induces a subarena. 
The attractor for Player $i$ of a set $F\subseteq V$ in $X$ is $\att{i}{X}{F}=\bigcup_{n=0}^{|V|}A_n$ where $A_0=F\cap X$ and 
\begin{align*}A_{n+1}=A_n\,\cup\,&\{v\in V_i\cap X\mid\exists v'\in A_n \text{ such that }(v,v')\in E\}\\
                         \cup\,&\{v\in V_{1-i}\cap X\mid\forall v'\in X \text{ with } (v,v')\in E: v'\in A_n\}\enspace.\end{align*} 
A $X\subseteq V$ is a trap for Player~$i$, if all outgoing edges of the vertices in
$V_i\cap X$ lead to $X$  and at least one successor of every vertex in
$V_{1-i}\cap X$ is in $X$.
\begin{lemma} Let $G=(V,V_0,V_1,E)$ be an arena and $F,X\subseteq V$.
\begin{enumerate}
\item For every $v\in\att{i}{X}{F}$ Player $i$ has a positional strategy to bring the play into $F$.
\item The set $V\setminus\att{i}{X}{F}$ induces a subarena and is a trap for Player~$i$ in $G$.
\end{enumerate}
\end{lemma}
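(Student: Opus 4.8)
The plan is to handle the two parts separately, both resting on the layered structure $A_0\subseteq A_1\subseteq\cdots$ that comes straight from the definition. Before either part I would record two elementary facts. Since the chain is increasing and each strict inclusion adds at least one of the finitely many vertices, it stabilises after at most $|V|$ steps; hence $\att{i}{X}{F}=A_{|V|}$ is a genuine fixpoint of the one-step operator, meaning that a vertex satisfying the defining condition with respect to $\att{i}{X}{F}$ already belongs to $\att{i}{X}{F}$. Moreover $\att{i}{X}{F}\subseteq X$, because $A_0=F\cap X$ and every later clause only ever adds vertices from $V_i\cap X$ or $V_{1-i}\cap X$. These two observations drive everything that follows.

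For part~1 I would assign to each $v\in\att{i}{X}{F}$ its rank $r(v)$, the least $n$ with $v\in A_n$, so that rank $0$ means $v\in F\cap X\subseteq F$. The positional strategy $\sigma$ is read off from the reason a vertex of rank $n+1$ was added: a vertex $v\in V_i$ enters only because it has an edge to some $v'\in A_n$, and I put $\sigma(v)=v'$, a successor of strictly smaller rank; a vertex $v\in V_{1-i}$ enters only because every $X$-successor of $v$ lies in $A_n$, so Player~$1-i$ cannot avoid lowering the rank without leaving $X$. The key step is then that along any $\sigma$-consistent play that stays in $X$ the rank strictly decreases at each move, so $F$ is reached within $r(v)\le|V|$ steps; I would phrase this by induction on $r(v)$.

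For part~2 set $Y=V\setminus\att{i}{X}{F}$ and argue by contraposition from the fixpoint property. If a vertex $v\in V_i\cap Y$ had a successor in $\att{i}{X}{F}$, the Player-$i$ clause would already have attracted $v$, contradicting $v\notin\att{i}{X}{F}$; hence all successors of $v$ stay in $Y$. Dually, if every successor of a vertex $v\in V_{1-i}\cap Y$ lay in $\att{i}{X}{F}$, the other clause would have attracted $v$; hence at least one successor stays in $Y$. These are precisely the two defining conditions of a trap for Player~$i$, and combined with the standing requirement that every vertex has an outgoing edge they also yield a successor in $Y$ for every vertex of $Y$, i.e.\ that $Y$ induces a subarena.

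The delicate point, and the one I would be most careful about, is the boundary between $X$ and the rest of the arena. The contraposition in part~2 is clean for vertices of $X$, but a $V_{1-i}$-vertex lying outside $X$ is never attracted no matter where its edges go, so a naive reading of the full complement $V\setminus\att{i}{X}{F}$ can fail both trap conditions; likewise in part~1 a $V_{1-i}$-vertex could leave $X$ instead of dropping its rank. Both difficulties vanish exactly when $X$ induces a subarena — the setting in which the lemma is used — since then every vertex of $X$ has an $X$-successor and plays are taken inside $G[X]$, so leaving $X$ is simply not an option. I would therefore read part~1 as forcing the play into $F$ inside $G[X]$ and the complement in part~2 relative to $X$, after which both statements hold without loose ends.
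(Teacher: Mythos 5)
Your proof is correct and is the standard rank-decreasing/fixpoint argument; the paper states this lemma without proof, treating it as folklore, so there is no alternative approach to compare against. The boundary issue you flag is a genuine imprecision in the statement as written, and your resolution --- taking the complement relative to $X$ when $X\neq V$ --- matches how the paper actually uses the lemma, e.g.\ in $Y_n = X_n\setminus\att{i}{X_n}{V\setminus\lbl(T_n)}$.
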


\section{The Scoring Functions and Finite-time Muller Games}
\label{sec_ftmg}

This section introduces the notions that are required to formally define 
finite-time Muller games. In his study of these games, McNaughton introduced the
concept of a score. For every set of vertices $F$ we define the score of a
finite play $w$ to be the number of times that $F$ has been visited entirely
since $w$ last visited a vertex in $V \setminus F$.

\begin{definition}[Score]
For every $F\subseteq V$ we define $\score_F\colon V^+\rightarrow\nats$ as
\begin{equation*}
\score_F(w)=\max\{k\in\nats\mid \exists x_1,\ldots,x_k\in V^+ \text{ such that }
\occ(x_i)=F \text{ for all $i$ and } x_1\cdots x_k\text{ is a suffix of }
w\}.
\end{equation*}
\end{definition}

We extend this notion by introducing the concept of an accumulator. 
For every set $F$, the accumulator measures the
progress that has been made towards the next score increase of $F$.

\begin{definition}[Accumulator]
For every $F\subseteq V$ we define $\acc_F\colon V^+\rightarrow\pow{F}$ by
$\acc_F(w)=\occ(x)$, where $x$ is the longest suffix of $w$ such that
$\score_F(w)=\score_F(wy^{-1})$ for every suffix $y$ of $x$, and
$\occ(x)\subseteq F$.
\end{definition}

Finally we define the maximum score function. This function maps a subset
$\F \subseteq 2^V$ and a play $\rho$ to the highest score that is reached during $\rho$ for a set contained in $\F$.

\begin{definition}[MaxScore]
For every $\F \subseteq 2^V$ we define $\maxscore_{\F}\colon V^+\cup V^{\omega}\rightarrow\nats\cup\{\infty\}$ by $\maxscore_{\F}(\rho) = \max_{F \in \F} \max_{w \sqsubseteq \rho} \score_F(w)$.
\end{definition}

McNaughton proposes that scores should be used to decide the winner in a
finite-time Muller game. As soon as a threshold score of $k$ for some
set $F$ is reached, the play is
stopped and Player~$i$ is declared the winner, if $F\in\F_i$.
The next lemma shows that this condition is sufficient to
ensure that the game terminates after a finite number of steps.

\begin{lemma}
\label{bounded_paths}
Let $G$ be an arena with vertex set $V$. Every $w\in V^*$ with $|w|\ge k^{|V|}$ satisfies $\maxscore_{\pow{V}}(w)
\ge k$. 
\end{lemma}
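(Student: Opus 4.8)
The plan is to prove the contrapositive by induction on $n=|V|$, after observing that $\score_F$ and $\occ$ depend only on the word $w$ and not on the edges of $G$, so the statement is really a combinatorial fact about words over an alphabet of size $n$. Concretely, I would prove: if $\maxscore_{\pow{V}}(w)<k$, then $|w|<k^{n}$. (The claim is vacuous for $k=0$, so assume $k\ge 1$.) The base case $n=1$ is immediate: then $w=a^{m}$ for the single letter $a$ and $\score_{\{a\}}(w)=m$, so $\maxscore_{\pow{V}}(w)<k$ forces $m\le k-1=k^{1}-1$.

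For the inductive step, fix $w$ with $\maxscore_{\pow{V}}(w)<k$. If $\occ(w)\subsetneq V$, then $w$ is a word over an alphabet of size at most $n-1$ and the induction hypothesis directly yields $|w|<k^{\,n-1}\le k^{n}$. So assume $\occ(w)=V$. I would then scan $w$ from left to right and cut it into minimal blocks, each completing $V$: let $B_{1}$ be the shortest prefix with $\occ(B_{1})=V$, let $B_{2}$ be the shortest block after $B_{1}$ with occurrence set $V$, and so on, leaving complete blocks $B_{1},\dots,B_{p}$ (each with $\occ(B_{j})=V$) together with a remainder $R$ satisfying $\occ(R)\subsetneq V$.

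Two estimates then finish the argument. First, $B_{1}\cdots B_{p}$ is a prefix of $w$ which, read as a single word, is a suffix of itself and decomposes into $p$ factors each of occurrence set $V$; hence $\score_{V}(B_{1}\cdots B_{p})\ge p$, and the hypothesis $\maxscore_{\pow{V}}(w)<k$ gives $p\le k-1$. Second, by minimality each block factors as $B_{j}=B_{j}'a_{j}$, where $a_{j}$ is the letter that first completes $V$, so $\occ(B_{j}')=V\setminus\{a_{j}\}$ has size $n-1$; likewise $R$ uses at most $n-1$ letters. To apply the induction hypothesis to $B_{j}'$ and to $R$, I would use the monotonicity fact that for a factor $z$ of $w$ and any prefix $u$ of $z$, the score $\score_{F}(u)$ is at most $\score_{F}(u')$ at the corresponding prefix $u'$ of $w$ (every suffix of $u$ is a suffix of $u'$); thus all scores inside $B_{j}'$ and $R$ stay below $k$, and the induction hypothesis gives $|B_{j}'|<k^{\,n-1}$ and $|R|<k^{\,n-1}$. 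Summing, $|w|=\sum_{j}(|B_{j}'|+1)+|R|\le(k-1)k^{\,n-1}+k^{\,n-1}-1=k^{n}-1$, as required.

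The main obstacle is the second estimate, where one must cleanly extract the recursion onto an $(n-1)$-letter alphabet. The delicate points are (i) verifying that a minimal completing block really removes exactly one letter, so that $\occ(B_{j}')$ has size $n-1$, and (ii) the monotonicity observation that suppressing the history preceding a factor cannot increase any score, which is what lets the bound $\maxscore_{\pow{V}}(w)<k$ be inherited by the sub-blocks so that the induction hypothesis applies to them. The counting of blocks via $\score_{V}$ and the closing arithmetic are then routine.
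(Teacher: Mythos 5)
Your proof is correct, but it takes a genuinely different route from the paper's. The paper argues the positive direction directly: by induction on $n=|V|$ it shows that any $w$ with $|w|\ge k^{n}$ contains an infix $x=x_1\cdots x_k$ with $\occ(x_i)=\occ(x)$ for all $i$, via the dichotomy ``either some infix of length $k^{n-1}$ uses fewer than $n$ distinct vertices (recurse into it), or every such infix contains all of $V$, in which case the prefix of length $k^{n}$ chopped into $k$ equal pieces already works.'' You instead prove the contrapositive and decompose $w$ greedily into minimal $V$-complete blocks $B_1,\dots,B_p$ plus a remainder: the score bound caps $p$ at $k-1$, and minimality strips exactly one letter off each block so the induction hypothesis applies over an $(n-1)$-letter alphabet. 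Both inductions are on $|V|$ and both hinge on the same monotonicity of $\score_F$ under passing to longer histories (the paper uses it implicitly when promoting an infix decomposition to a score of some prefix of $w$; you use it explicitly to push the hypothesis $\maxscore_{\pow{V}}(w)<k$ down to the sub-blocks). The paper's version is shorter; yours is more structural and essentially reverse-engineers the extremal words $w_{(k,n)}=(w_{(k,n-1)}n)^{k-1}w_{(k,n-1)}$ that the paper exhibits right after the lemma to show tightness, which is why the bound $k^{n}-1$ falls out of your arithmetic exactly. Your two ``delicate points'' are both sound: $\occ(B_j')=V\setminus\{a_j\}$ has size exactly $n-1$ because $\occ(B_j')\cup\{a_j\}=V$ while $a_j\notin\occ(B_j')$ by minimality of $B_j$, and the monotonicity observation is immediate since every suffix of a prefix of a factor of $w$ is also a suffix of the corresponding prefix of $w$.
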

\begin{proof}
We will show by induction over $|V|$ that every word
$w\in V^*$ with $|w|\ge k^{|V|}$ contains an infix $x$ that
can be decomposed as $x=x_1\cdots x_k$ where every $x_i$ is a non-empty
word with $\occ(x_i)=\occ(x)$. This will imply $\maxscore_{\pow{V}}(w)
\ge k$. 

The claim holds trivially for $|V|=1$ by choosing $x$ to be the prefix of $w$ of length $k$ and $x_i=s$ for the single vertex $s\in V$. For the induction step, consider a set $V$ with $n+1$ vertices. If $w$ contains an infix $x$ of length $k^n$ which contains at most $n$ distinct vertices, then we can apply the inductive hypothesis and obtain a decomposition of an infix of $v$ with the desired properties. Otherwise, every infix $x$ of $w$ of length $k^n$ contains every vertex of $V$ at least once. Let $x$ be the prefix of length $k^{n+1}$ of $w$ and let $x=x_1\cdots x_k$ be the decomposition of $x$ such that each $x_i$ is of length $k^n$. Then, we have $\occ(x_i)=\occ(x)=V$ for all $i$. Therefore, the decomposition has the desired properties.
\end{proof}

Lemma~\ref{bounded_paths} implies that a finite-time Muller game with threshold $k$ must
end after at most $k^{|V|}$ steps. We can also show that this bound is tight. For
every $k>0$ we give an inductive definition of a word over the alphabet
$\Sigma_n=\{1,\ldots,n\}$ by $w_{(k,1)}=1^{k-1}$ and
$w_{(k,n)}=(w_{(k,n-1)}n)^{k-1}w_{(k,n-1)}$. Clearly, the word $w_{(k, n)}$ has
length $k^n-1$, and it can also be shown that $\maxscore_{\pow{\Sigma_n}}(w)<k$.\medskip

Finally, to declare a unique winner in every finite-time Muller game we must
exclude the case where there are two sets such
that both sets hit score $k$ at the same time. McNaughton observed that for $k
\ge 2$ the first set to hit score $k$ will be unique. Before we reprove this, we
will first show a useful auxiliary result that will also be used later in the
paper.

\begin{lemma}[cf. Theorem 4.2 of \cite{M00}]\label{lem_chain}
Let $w\in V^+$.
The sets $F$ with $\score_F(w)\ge 1$ together with the sets $\acc_F(w)$ for some $F$ 
form a chain with respect to the subset relation. 
\end{lemma}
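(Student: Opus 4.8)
The plan is to exhibit one chain that contains \emph{every} set mentioned in the statement. The crucial observation is that, for a fixed word $w$, the occurrence sets of the suffixes of $w$ are totally ordered by inclusion: if $x$ and $x'$ are suffixes of $w$ with $|x|\le|x'|$, then $x$ is itself a suffix of $x'$, whence $\occ(x)\subseteq\occ(x')$. Writing $\mathcal{C}=\{\occ(x)\mid x\text{ is a suffix of }w\}$ (including the empty suffix, which contributes $\emptyset$), this family is a chain with respect to $\subseteq$. Since every subfamily of a chain is again a chain, it suffices to show that each set named in the lemma lies in $\mathcal{C}$.

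First I would dispose of the accumulators. By definition $\acc_F(w)=\occ(x)$ for a specific suffix $x$ of $w$, so $\acc_F(w)\in\mathcal{C}$ holds immediately for every $F$; here I would only note that such an $x$ always exists (the empty suffix satisfies both defining conditions), so $\acc_F(w)$ is well defined. Next I would treat the sets $F$ with $\score_F(w)\ge 1$. Fixing such an $F$, the definition of $\score_F$ supplies a witnessing decomposition $x_1\cdots x_k$ with $k\ge 1$, forming a suffix of $w$ in which each block satisfies $\occ(x_i)=F$. The last block $x_k$ is then a suffix of $w$ with $\occ(x_k)=F$, so $F=\occ(x_k)\in\mathcal{C}$.

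Combining the two cases, every set occurring in the collection belongs to the chain $\mathcal{C}$, so any two of them are comparable, which proves the claim. I do not expect a genuine obstacle here: both notions are, by construction, occurrence sets of suffixes of $w$, and the inclusion chain among suffix occurrence sets is an elementary monotonicity fact. The only point deserving care is the verification that $\score_F(w)\ge 1$ forces $F$ to be realized as the occurrence set of a suffix — this is exactly the role played by the final block $x_k$ of a score-witnessing decomposition — and unwinding the definitions of $\score_F$ and $\acc_F$ accordingly.
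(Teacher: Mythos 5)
Your proof is correct and is essentially the same argument as the paper's: both reduce every set in question to the occurrence set of a suffix of $w$ (the accumulators by definition, the sets with positive score via the witnessing decomposition) and then use the fact that any two suffixes of $w$ are comparable under the suffix relation, hence their occurrence sets are comparable under inclusion. Your packaging of this as membership in one explicit chain $\mathcal{C}$ rather than a pairwise comparability check is only a cosmetic difference.
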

\begin{proof}
It suffices to show that all such sets are pairwise comparable: let $F$ and $F'$ be two sets such that either $\score_F(w)\ge 1$ or $F=\acc_{H}(w)$ for some $H\subseteq V$ and either $\score_{F'}(w)\ge 1$ or $F'=\acc_{H'}(w)$ for some $H'\subseteq V$. Then, there exist two decompositions $w=w_0w_1$ and $w=w_0'w_1'$ with $\occ(w_1)=F$ and $\occ(w_1')=F'$. Now, either $w_1$ is a suffix of $w_1'$ or vice versa. In the first case, we have $F\subseteq F'$ and in the second case $F'\subseteq F$.
\end{proof}

Note that Lemma~\ref{lem_chain} implies that there are at any time at most $|V|$ sets with non-zero scores.

\begin{lemma}[\cite{M00}]
\label{lem_uniqueF}
Let $k,l\ge 2$, let $F,F'\subseteq V$, let $w\in V^*$ and $v\in V$ such that $\score_F(w)<k$ and $\score_{F'}(w)<l$. If $\score_F(wv)=k$ and $\score_{F'}(wv)=l$, then $F=F'$.
\end{lemma}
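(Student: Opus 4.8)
The plan is to deduce from the hypotheses that, immediately before the last move, the set $F$ has a positive score while the set $F'\setminus\{v\}$ is an accumulator; Lemma~\ref{lem_chain} then forces these two sets to be comparable, and a short set-theoretic computation finishes the argument.

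First I would extract the elementary consequences of the hypotheses. Because $\score_F(wv)=k\ge 1$, every suffix of $wv$ witnessing this score ends in $v$ and has occurrence set $F$, so $v\in F$; symmetrically $v\in F'$. Next I would establish the two step-dynamics facts that drive the proof: since $k\ge 2$ and appending the single vertex $v$ raises the score of $F$ from $\score_F(w)<k$ up to $k$, the score can only have increased by one, so $\score_F(w)=k-1\ge 1$; moreover the round of $F$ that is completed by $v$ shows that the accumulator just before the move satisfies $\acc_F(w)\cup\{v\}=F$, and since an accumulator is always a proper subset of $F$ (otherwise the corresponding cover would already have been counted), this gives $v\notin\acc_F(w)$ and hence $\acc_F(w)=F\setminus\{v\}$. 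The same reasoning applied to $F'$ yields $\score_{F'}(w)=l-1\ge 1$ and $\acc_{F'}(w)=F'\setminus\{v\}$.

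With these facts in hand the conclusion is immediate. By Lemma~\ref{lem_chain}, the set $F$ (which has $\score_F(w)\ge 1$) and the set $F'\setminus\{v\}=\acc_{F'}(w)$ lie in a common chain under $\subseteq$, so they are comparable. Since $v\in F$ while $v\notin F'\setminus\{v\}$, the inclusion $F\subseteq F'\setminus\{v\}$ is impossible; therefore $F'\setminus\{v\}\subseteq F$, and adding the element $v\in F$ gives $F'\subseteq F$. Running the symmetric argument with the roles of $F$ and $F'$ exchanged (using $\acc_F(w)=F\setminus\{v\}$ together with $\score_{F'}(w)\ge 1$) gives $F\subseteq F'$, and hence $F=F'$.

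The main obstacle is the pair of step-dynamics facts, not the chain argument. The difficulty is that $\score_F$ behaves badly under deleting letters: visiting a vertex outside $F$ resets it to $0$, so it is not monotone along prefixes, and, crucially, a block whose occurrence set equals $F$ but which sits as a proper infix of $w$ contributes nothing to $\score_F(w)$, which counts only covers forming a suffix. Consequently the naive manipulation of simply deleting the final $v$ from an optimal decomposition of $wv$ does not produce a valid decomposition of $w$. I would instead work with the maximal suffix of $w$ whose occurrence set is contained in $F$ and analyse the greedy right-to-left packing of $F$-covers inside it; equivalently, I would first prove the online update rule for the pair $(\score_F,\acc_F)$ --- appending a vertex in $F$ enlarges the accumulator and increments the score precisely when the accumulator becomes all of $F$, while a vertex outside $F$ empties the accumulator and zeroes the score --- from which both facts, and in particular that the score rises by at most one per move, follow directly. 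For the weaker statement $\score_F(w)\ge 1$ alone one can argue more cheaply: if it were $0$ then some $x\in F$ would be absent from that maximal $F$-suffix, so $x$ could occur at most once in $wv$, which rules out the two disjoint $F$-covers required for $\score_F(wv)\ge 2$.
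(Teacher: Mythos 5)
Your proof is correct, but it takes a genuinely different route from the one in the paper. The paper argues by contradiction entirely at the level of the word $wv$: it applies Lemma~\ref{lem_chain} to $wv$ (where both $F$ and $F'$ have positive score) to assume w.l.o.g.\ $F'\subset F$, picks $q\in F\setminus F'$, and then compares the two witnessing decompositions $wv=w_0w_1\cdots w_k$ and $wv=w_0'\cdots w_l'$ directly: the last $F$-block $w_k$ contains $v$ only as its final letter (else $\score_F(w)\ge k$), yet the penultimate $F'$-block $w_{l-1}'$ contains $v$ and sits strictly inside $w_kv^{-1}$, a contradiction. You instead step back to $w$, establish the one-step update facts $\score_F(w)=k-1\ge 1$ and $\acc_F(w)=F\setminus\{v\}$ (and likewise for $F'$), and then apply Lemma~\ref{lem_chain} to $w$ twice, comparing $F$ with $\acc_{F'}(w)=F'\setminus\{v\}$ and symmetrically, to get both inclusions without a contradiction argument. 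Your step-dynamics claims do hold and your sketch of how to prove them is sound: merging $w_{k-1}$ with $w_kv^{-1}$ shows $\score_F(w)\ge k-1$; the suffix $w_kv^{-1}$ has occurrence set exactly $F\setminus\{v\}$ and leaves the score constant, so it is contained in the accumulator suffix, while $\acc_F(w)\subsetneq F$ forces equality. The trade-off is that your route needs this extra machinery (an online update rule for $(\score_F,\acc_F)$ that the paper never states explicitly, though it implicitly relies on similar facts later, e.g.\ around Lemma~\ref{lem_cp}), whereas the paper's argument is shorter and self-contained; in exchange, your version is symmetric in $F$ and $F'$, avoids the w.l.o.g., and isolates reusable facts about how scores and accumulators evolve by single steps.
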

\begin{proof}
Towards a contradiction assume $F\not=F'$. By Lemma~\ref{lem_chain} we can assume $F'\subset F$, i.e., there exists some $q\in F\setminus F'$. Then, $\score_F(wv)=k$ and $\score_{F'}(wv)=l$ imply the existence of decompositions $wv=w_0w_1\cdots w_k$ and $wv=w_0'w_1'\cdots w_l'$ such that $\occ(w_i)=F$ and $\occ(w_i')=F'$ for all $i\ge 1$. As $q\notin F'$, $w_1'\cdots w_l'$ is a proper suffix of $w_k$. Furthermore, as $\score_F(w)<k$, we have $v\notin\occ(w_kv^{-1})$. However, we have $v\in F'$ and hence $v\in\occ(w_{k-1}')$, which is an infix of $w_kv^{-1}$. This yields the desired contradiction.\qedhere
\end{proof}

We are now in a position to define a finite-time Muller game.
Such a game $\game=(G,\F_0,\F_1,k)$ consists of an arena
$G=(V,V_0,V_1,E)$, a partition $(\F_0,\F_1)$ of $\pow{V}$, and a threshold $k\ge
2$. By Lemma~\ref{bounded_paths} we have that every infinite play must reach
score $k$ for some set $F$ after a bounded number of steps. Therefore, we
define a play for the finite-time Muller game to be a finite path $w=w_0\cdots w_n$
with $\maxscore_{\pow{V}}(w_0\cdots w_n)=k$, but $\maxscore_{\pow{V}}(w_0\cdots w_{n-1})<k$.
Due to Lemma~\ref{lem_uniqueF}, there is a unique $F\subseteq V$ such that
$\score_F(w)=k$. Player~$0$ wins the play $w$ if $F\in\F_0$ and Player~$1$ wins
otherwise. The definitions of strategies, plays, and winning sets can
be redefined for the finite games.

Zermelo~\cite{Z13} has shown that a game in which every play is finite is
determined. Therefore, it immediately follows that finite Muller games are
determined.
\begin{lemma}
Finite-time Muller games are determined.
\end{lemma}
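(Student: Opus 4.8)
The plan is to reduce the statement to Zermelo's theorem~\cite{Z13}, which guarantees determinacy of any game all of whose plays are finite. Accordingly, the work reduces to verifying two hypotheses about a finite-time Muller game $\game=(G,\F_0,\F_1,k)$: that every play ends after finitely many moves, and that each completed play is won by exactly one player.

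First I would establish that plays have bounded length. By Lemma~\ref{bounded_paths}, any word $w\in V^*$ with $|w|\ge k^{|V|}$ already satisfies $\maxscore_{\pow{V}}(w)\ge k$. Since a play of the finite-time game is by definition a path whose maximum score first reaches $k$ at its final position, no play can be longer than $k^{|V|}$. Hence the game unfolds into a finite tree of depth at most $k^{|V|}$ whose internal nodes are owned by the player controlling the corresponding arena vertex and whose leaves are exactly the terminated plays.

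Second I would check that the winner assignment on the leaves is total and unambiguous. A leaf is a play $w=w_0\cdots w_n$ with $\maxscore_{\pow{V}}(w)=k$ but $\maxscore_{\pow{V}}(wv^{-1})<k$, where $v=w_n$; so some set does reach score $k$ at the final step and a winner is declared. By Lemma~\ref{lem_uniqueF}, the set $F$ with $\score_F(w)=k$ is unique, whence exactly one of $F\in\F_0$ and $F\in\F_1$ holds. Thus the leaves split into those winning for Player~$0$ and those winning for Player~$1$, with no leaf left unassigned or doubly assigned.

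With both hypotheses in place, the conclusion follows by the backward-induction argument underlying Zermelo's theorem: label each node of the finite tree with the player who can force a win from it, where a node owned by Player~$i$ is winning for her if and only if some successor is, and a node owned by her opponent is winning for her if and only if all successors are. This labels every node, in particular every root associated to a starting vertex, and so $W_0$ and $W_1$ partition $V$. I expect the only genuine subtlety to be the well-definedness of the winner at each leaf, which is exactly what Lemma~\ref{lem_uniqueF} supplies; the bounded-length claim is immediate from Lemma~\ref{bounded_paths}, and the remaining determinacy argument is the standard finite-game induction.
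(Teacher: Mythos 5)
Your proposal is correct and follows the paper's own route exactly: the paper likewise invokes Zermelo's theorem after having established finiteness of plays via Lemma~\ref{bounded_paths} and uniqueness of the winning set via Lemma~\ref{lem_uniqueF}. You merely spell out the backward-induction step that the paper leaves implicit.
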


In fact, McNaughton considered a slightly different definition of a finite-time
Muller game. Rather than stopping the play when the score of a set reaches
the global threshold $k$, his version stops the play when the score of a set $F$
reaches $|F|! + 1$. 

\begin{theorem}[\cite{M00}]\label{thm_ftmg}
If $W_i$ is the winning region of Player~$i$ in a Muller game $(G,\F_0,\F_1)$,
and $W_i'$ is the winning region of Player~$i$ in McNaughton's finite-time
Muller game, then $W_i=W_i'$.
\end{theorem}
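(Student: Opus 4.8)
The plan is to prove both inclusions $W_i \subseteq W_i'$ and then pass to equality via determinacy. Both games are determined: the infinite Muller game by finite-state determinacy, and McNaughton's finite-time game by Zermelo's theorem, since every one of its plays is finite. As $(W_0,W_1)$ and $(W_0',W_1')$ both partition $V$, it suffices to establish $W_i \subseteq W_i'$ for $i \in \{0,1\}$: from $W_0 \subseteq W_0'$ and $W_1 \subseteq W_1'$ together with $W_0' = V \setminus W_1'$ one recovers the reverse inclusions, hence equality. Moreover the argument for the two players is symmetric (swap the roles of $\F_0$ and $\F_1$), so I only have to show: \emph{if Player~$i$ has a winning strategy from $v$ in the infinite game, then she also wins the finite-time game from $v$}.

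First I would fix a finite-state winning strategy $\sigma$ for Player~$i$ in the infinite game of the latest-appearance-record form that underlies finite-state determinacy, and let Player~$i$ play the finite-time game according to $\sigma$. Let $w$ be the resulting finite play; it stops as soon as $\score_F(w) = |F|! + 1$ for some $F$, and by Lemma~\ref{lem_uniqueF} this $F$ is unique. It remains to show $F \in \F_i$, for then Player~$i$ is declared the winner of $w$. The decisive observation is that the suffix of $w$ witnessing the score decomposes as $x_1 \cdots x_{|F|!+1}$ with $\occ(x_j) = F$ for every $j$; throughout this suffix the play never leaves $F$, since no block $x_j$ may contain a vertex of $V \setminus F$.

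The core of the proof is a pumping argument on the memory structure $\mem$ implementing $\sigma$. I would track the configuration of $\mem$ at each of the $|F|!+1$ boundaries between consecutive blocks $x_j$. Restricted to plays that stay inside $F$, the latest-appearance record underlying $\sigma$ stabilises so that only the relative order of the $|F|$ vertices of $F$ remains significant; hence at these boundaries at most $|F|!$ distinct configurations can occur. By the pigeonhole principle two boundaries carry the same configuration, and the block of moves between them forms a memory cycle that lies entirely in $F$ and visits every vertex of $F$. Appending this cycle infinitely often to the corresponding play prefix produces an infinite play $\rho$ that is consistent with $\sigma$ and satisfies $\infi(\rho) = F$. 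Since $\sigma$ is winning for Player~$i$ in the infinite game, $F = \infi(\rho) \in \F_i$, which is exactly what we needed.

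I expect the genuine obstacle to be the memory bound in the previous paragraph, namely the claim that \emph{$\sigma$ uses at most $|F|!$ configurations on plays confined to $F$}. A generic finite-state strategy of global size $n \cdot n!$ would only force a repetition after $n \cdot n! + 1$ completions of $F$; obtaining the sharper threshold $|F|!+1$ requires exploiting the specific structure of the latest-appearance record and arguing that, once a play is trapped in $F$, the part of the record outside $F$ freezes and carries no information, leaving only the $|F|!$ orderings of $F$. This quantitative analysis of the record is the combinatorial heart of McNaughton's argument and is where the bulk of the work lies.
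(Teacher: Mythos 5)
The paper does not actually prove this theorem: it is imported from McNaughton \cite{M00}, and the authors' own contribution (Theorem~\ref{thm_k3} via Lemma~\ref{lem_k3}) strengthens it by an entirely different route, extracting from Zielonka's algorithm a strategy that bounds the opponent's scores by $2$. So your proposal can only be measured against McNaughton's argument, and in outline it is that argument: reduce to one inclusion via determinacy of both games, play a finite-state winning strategy of latest-appearance-record type in the finite-time game, decompose the witnessing suffix as $x_1\cdots x_{|F|!+1}$ with $\occ(x_j)=F$, find two block boundaries carrying the same memory configuration, and pump the intervening segment into an infinite play $\rho$ consistent with the strategy with $\infi(\rho)=F$, forcing $F\in\F_i$. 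The surrounding bookkeeping is sound: uniqueness of $F$ does follow from Lemma~\ref{lem_uniqueF} (which is stated for two thresholds $k,l$ and so covers the set-dependent thresholds $|F|!+1$), the tail of the record outside $F$ is indeed frozen along the suffix, and equal order vectors determine equal current vertices, so the pumped sequence is a path consistent with the strategy.

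The step you leave open is, however, the load-bearing one, and as written it is a genuine gap rather than a routine verification. Your pigeonhole at threshold $|F|!+1$ needs at most $|F|!$ distinct memory configurations at the $|F|!+1$ block ends, which holds only if the winning strategy reads nothing but the order vector. The finite-state determinacy the paper provides (strategies of size $n\cdot n!$) is for the latest appearance record \emph{with hit position}; a strategy of that generic type, confined to $F$, still admits $|F|\cdot|F|!$ configurations at block ends, so your argument would only justify the weaker threshold $|F|\cdot|F|!+1$. To close the proof you must invoke (or establish) the sharper determinacy statement that order-vector memory alone suffices for Muller games --- which is what \cite{M93} and \cite{M00} actually supply --- rather than generic finite-state determinacy. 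Labelling this ``the combinatorial heart'' is honest, but until that statement is on the table the proof is incomplete.
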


\section{Zielonka's Algorithm For Muller Games}
\label{sec_alg}

This section presents Zielonka's algorithm for Muller games~\cite{Z98}, 
a reinterpretation of an earlier algorithm due to McNaughton~\cite{M93}.
Our notation mostly follows \cite{DJ97,DJ98}. We will use the internal structure
of the winning regions as computed by the algorithm to define a strategy that bounds
the scores of the losing player by~$2$.

As we consider uncolored arenas, we have to deal with Muller games where
$(\F_0,\F_1)$ is a partition of $\pow{V'}$ for some finite set $V'\supseteq V$, as the 
algorithm makes recursive calls for such games. This does not change the 
semantics of Muller games, as we have $\infi(\rho)\subseteq V$ for every infinite
play $\rho$.

We begin by introducing Zielonka trees, a representation of
winning conditions $(\F_0,\F_1)$.
Given a family of sets $\F\subseteq \pow{V'}$ and $X\subseteq V'$, we define
$\F\restriction X=\{F\in\F\mid F\subseteq X\}$. Given a partition $(\F_0,\F_1)$
of $\pow{V'}$, we define $(\F_0,\F_1)\restriction X=(\F_0\restriction X,
\F_1\restriction X)$. Note that $\F\restriction X\subseteq \F$.

\begin{definition}[Zielonka tree]
For every winning condition $(\F_0, \F_1)$ defined over a set $V'$, its Zielonka tree 
$\mathcal{Z}_{\F_0, \F_1}$ is defined as follows: suppose that $V' \in \F_i$ 
and let $V_{0}', V_{1}', \dots, V_{k-1}'$ be the 
$\subseteq$-maximal sets in $\F_{1-i}$. The tree $\mathcal{Z}_{\F_0, \F_1}$ 
consists of a root vertex labelled by $V'$ with $k$ children which are 
defined by $\mathcal{Z}_{(\F_0, \F_1) \restriction V_0'}, \dots, 
\mathcal{Z}_{(\F_0, \F_1) \restriction V_{k-1}'}$.
\end{definition}
For every Zielonka tree $T$, we define $\lbl(T)$ to be the label of the root in
$T$, we define $\branch(T)$ to be the number of children that the root has in
$T$, and we define $\child(T, j)$ for $0 \le j < \branch(T)$ to be the $j$-th 
child of the root in $T$. Here, we assume that the children of every vertex are 
ordered by some fixed linear order. 

The input of Zielonka's algorithm (see Algorithm~\ref{alg_z}) is a finite arena 
$G$ with vertex set $V$ and the Zielonka tree of a partition $(\F_0,\F_1)$ of
$\pow{V'}$ for some finite set $V'\supseteq V$. The algorithm computes the winning
regions of the players by successively removing parts of Player~$0$'s winning region
(the sets $U_0,U_1,U_2,\ldots$). By doing this, the algorithm computes an internal structure
of the winning regions that will be crucial to proving our results in the next section.

For the rest of this paper
we will refer to the sets of vertices and the subtrees of $\ztree$ as computed 
by the algorithm.

\begin{algorithm}[h]
\begin{algorithmic}
\STATE $i :=$ The index $j$ such that $\lbl(\mathcal{Z}_{\F_0, \F_1}) \in \F_j$
\STATE $k := \branch(\mathcal{Z}_{\F_0 , \F_1})$
\IF{The root of $\mathcal{Z}_{\F_0, \F_1}$ has no children}
\STATE $W_i = V$; $W_{1-i} = \emptyset$
\STATE $\textbf{return} (W_0, W_1)$
\ENDIF
\STATE $U_1 := \emptyset$; $n := 0$
\REPEAT
\STATE $n := n+1$
%%NOTE: old formulation and ordering was not correct
\STATE $A_n := \att{1-i}{V}{U_{n-1}}$
\STATE $X_n := V \setminus A_{n}$
\STATE $T_n := \child(\mathcal{Z}_{\F_0 , \F_1}, n \mod k)$
\STATE $Y_n := X_n \setminus \att{i}{X_n}{V \setminus
\lbl(T_n)}$
\STATE $(W^{n}_0, W^{n}_1) := \text{Zielonka}(G[Y_n],
T_n)$
\STATE $U_n := A_{n} \cup W^{n}_{1-i}$
\UNTIL{$U_n = U_{n-1} = \dots = U_{n-k}$}
\STATE $W_i = V\setminus U_n$; $W_{1-i} = U_n$
\STATE \textbf{return} $(W_0, W_1)$
\end{algorithmic}
\caption{Zielonka$(G, \mathcal{Z}_{\F_0, \F_1})$.}
\label{alg_z}
\end{algorithm}

Figure~\ref{fig_alg} depicts the situation in the $n$-th iteration of the algorithm. The vertices
in $U_{n-1}$ have already been removed and belong to $W_{1-i}$. Then, all vertices in the $(1-i)$-attractor of $U_{n-1}$ also
belong to $W_{1-i}$. After removing these vertices from the arena, the algorithm also removes the vertices in the $i$-attractor of
$\lbl(T_n)$. The remaining vertices form a subarena whose vertex set is a subset of $\lbl(T_n)$. Hence, the algorithm can
recursively compute the winning regions $W_i^n$ in this subarena with Zielonka tree $T_n$. By construction, the winning region
$W_{1-i}^n$ is also a subset of $W_{1-i}$. This is repeated until the sets $U_n$ converge to $W_{1-i}$. All remaining vertices belong to
$W_i$. 

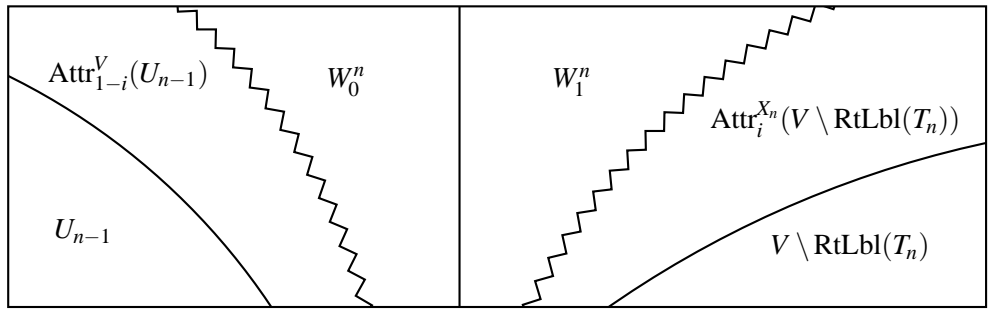
\begin{figure}[b]
\centering
\begin{tikzpicture}
\draw[thick] (0,0) rectangle (13,4);
\begin{scope}
\clip (0,0) rectangle (13,4);

\draw[thick] decorate [decoration =zigzag] {(-3,-2) circle (8cm)};
\draw[thick] (-4,-5) circle (9cm);

\node at (1,1) {$U_{n-1}$};
\node at (1.6,3.1) {$\att{1-i}{V}{U_{n-1}}$};

\draw[thick] decorate [decoration =zigzag] {(15,-4) circle (9cm)};
\draw[thick] (16,-11.5) circle (14cm);

\node at (11.2,.8) {$V\setminus\lbl(T_n)$};
\node at (11,2.5) {$\att{i}{X_n}{V\setminus\lbl(T_n)}$};

\draw[thick] (6,0) -- (6,5);

\node at (4.5,3) {$W_0^n$};
\node at (7.5,3) {$W_1^n$};

\end{scope}
\end{tikzpicture}
\caption{The sets computed by Zielonka's algorithm.}
\label{fig_alg}
\end{figure}

% \begin{itemize}
% \item $A_n$ is the $1-i$-attractor of $U_{n-1}$ in $V$. It is clear that $A_n$
% is a trap for Player $i$.
% \item $X_n$ contains the vertices which still have to be considered. 
% As the complement of an $1-i$ attractor, it is a trap for Player~$1-i$ in $V$.
% \item $Y_n$ contains the vertices from which Player~$i$ cannot attract a play
% into a vertex in $V\setminus\lbl(T_n)$, where $\lbl(T_n)$ is the label of the ($n$ mod $k$)-th 
% child of the root of $\ztree$. As the complement of an $i$-attractor, it is a trap
% for Player~$i$ in $X_n$. Furthermore, it induces an arena, as the complement of the
% complement of an attractor and we have $Y_n\subseteq \lbl(T_n)$. Hence, $(G[Y_n],T_n)$
% is a Muller game. Here, we need to allow a winning condition that is defined over a
% superset of the arena's vertices, as we cannot guarantee that $Y_n=\lbl(T_n)$.
% \item $W_i^n$ is the winning region of Player~$i$ in the Muller game $(G[Y_n],T_n)$.
% Each of these sets is a trap for Player~$1-i$ in $Y_n$, i.e., Player~$i$ can confine a play
% in $W_i^n$, if starts in $W_i^n$.  
% \item $U_n$ is the union of $A_n$ and $W_i^n$. Player~$1-i$ cannot have an edge leading 
% from a vertex $v$ in $W_i^n$ to a vertex $v'$ in 
% $V\setminus U_n= W_{1-i}^n\cup \att{i}{X_n}{V \setminus \lbl(T_n)}$: $v'\in W_{1-i}^n$
% is ruled out by the fact that $W_i^n$ is a trap for him in $Y_n$ and an edge to  
% $v'\in \att{i}{X_n}{V \setminus \lbl(T_n)}$ would imply $v\in \att{i}{X_n}{V 
% \setminus \lbl(T_n)}$ and thus $v\notin Y_n$. Hence, $U_n$ is indeed a trap 
% for Player~$1-i$.
% \end{itemize}

Furthermore, we have the following properties that will be used in the next section. Let
$n$ denote the index at which Zielonka's algorithm terminated.
The sets $W_{1-i}^j$ for $j\le n$ are obviously disjoint. However, the sets $W_{i}^{n-j}$
for $j$ in the range $n-k<j\le n$ might overlap.
Player~$i$ can confine a play in $W_i^{n-j}$
until Player~$1-i$ decides to leave this set. However, his only choice is to move
to a vertex in $\att{i}{X_{n-j}}{V \setminus \lbl(T_{n-j})}$, as he can neither move to a vertex
in $A_n=A_{n-j}$ ($X_n=X_{n-j}$ is a trap for him) nor to a vertex in $W_{1-i}^{n-j}=\emptyset$.
This implies that Player~$i$ can force the play to visit $V \setminus \lbl(T_{n-j})$,
if Player~$1-i$ decides to leave $W_1^{n-j}$.

\begin{theorem}[\cite{Z98}]
\label{thm_correctness}
Algorithm~\ref{alg_z} terminates with a partition $(W_0, W_1)$, where
Player~$0$ has a winning strategy for $W_0$ and Player~$1$ has a winning
strategy for $W_1$.
\end{theorem}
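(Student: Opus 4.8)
The plan is to prove the theorem by induction on the number of nodes of the Zielonka tree $\ztree$, which also bounds the recursion depth of the algorithm. In the base case the root of $\ztree$ has no children, so $\F_{1-i}$ contains no subset of $V'$ and hence $\infi(\rho)\in\F_i$ for every play $\rho$; thus Player~$i$ wins from every vertex and the algorithm correctly returns $W_i=V$, $W_{1-i}=\emptyset$. For the inductive step I would first settle termination: since $A_n=\att{1-i}{V}{U_{n-1}}\supseteq U_{n-1}$, the sequence $(U_n)_n$ is non-decreasing, so by finiteness of $V$ it becomes constant, at which point it is in particular constant over a full cycle of $k$ children and the loop stops. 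Let $n$ be the terminal index; the two things left to show are that Player~$1-i$ wins from $U_n$ and that Player~$i$ wins from $V\setminus U_n$.

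For the first claim I would exhibit a winning strategy for Player~$1-i$ on $U_n$ built from the layers $U_0=\emptyset\subseteq U_1\subseteq\dots\subseteq U_n$. On the attractor part $A_j\setminus U_{j-1}$ Player~$1-i$ uses the attractor strategy provided by the lemma on attractors to descend into $U_{j-1}$, and on each recursive region $W_{1-i}^j$ she uses the winning strategy furnished by the induction hypothesis for the subgame $G[Y_j]$ with tree $T_j$. The key observation is that under this strategy the layer index never increases: a move out of $W_{1-i}^j$ can be made only by Player~$i$ and necessarily lands in $A_j\subseteq U_j$, while on $A_j$ the attractor strategy forces a strict descent to $U_{j-1}$ and, by the attractor definition, Player~$i$ cannot escape $A_j$ upward. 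Hence along any consistent play the minimal layer visited infinitely often is eventually never left, which forces the play to remain in some $W_{1-i}^{j^*}$ forever; the induction hypothesis then yields $\infi(\rho)\in\F_{1-i}\restriction\lbl(T_{j^*})\subseteq\F_{1-i}$, so Player~$1-i$ wins.

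The second claim is the heart of the argument and the step I expect to be hardest. Because the loop terminated with $U_n=U_{n-1}=\dots=U_{n-k}$, for each of the last $k$ iterations $j$ we have $A_j=\att{1-i}{V}{U_n}=U_n$ (as $U_n$ is already a $(1-i)$-attractor), hence $X_j=V\setminus U_n$ and $W_{1-i}^j=\emptyset$; consequently these iterations range over all $k$ children of the root, and for each child $T^{(m)}=\child(\ztree,m)$ Player~$i$ wins all of $G[Y^{(m)}]$ by the induction hypothesis, say with strategy $\tau^{(m)}$, where $Y^{(m)}=(V\setminus U_n)\setminus\att{i}{V\setminus U_n}{V\setminus\lbl(T^{(m)})}$. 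I would then define a global strategy for Player~$i$ on $V\setminus U_n$ that maintains a current child index $m$: play $\tau^{(m)}$ while the play stays in $Y^{(m)}$, and when Player~$1-i$ leaves $Y^{(m)}$, which (as recorded before the theorem statement, since $Y^{(m)}$ is a trap for Player~$i$) can only move into $\att{i}{V\setminus U_n}{V\setminus\lbl(T^{(m)})}$, use the attractor strategy to force a visit to $V\setminus\lbl(T^{(m)})$ and then advance $m$ to $m+1\bmod k$. Since $V\setminus U_n$ is a trap for Player~$1-i$, Player~$1-i$ cannot leave it and Player~$i$ keeps the play inside throughout.

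It then remains to verify that this strategy wins. If the child index is advanced only finitely often, the play stays in some $Y^{(m)}$ forever and $\tau^{(m)}$ guarantees $\infi(\rho)\in\F_i\restriction\lbl(T^{(m)})\subseteq\F_i$. If it is advanced infinitely often, then every child is selected infinitely often and for each $m$ the play visits $V\setminus\lbl(T^{(m)})$ infinitely often, so $\infi(\rho)\not\subseteq\lbl(T^{(m)})$ for every $m$; since $\lbl(T^{(0)}),\dots,\lbl(T^{(k-1)})$ are exactly the $\subseteq$-maximal sets of $\F_{1-i}$, no element of $\F_{1-i}$ contains $\infi(\rho)$, whence $\infi(\rho)\in\F_i$. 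Either way Player~$i$ wins, completing the induction. The main obstacle is precisely this cycling construction: one must exploit both that the algorithm rotates through the children and that the termination test demands stability across an entire cycle, which together guarantee the escape from every maximal set of $\F_{1-i}$.
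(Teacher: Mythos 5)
Your proposal is correct and follows essentially the same route as the paper's (sketched) argument: induction over the Zielonka tree, with Player~$1-i$ combining attractor strategies towards $U_{j-1}$ with the recursively computed strategies on the sets $W_{1-i}^j$, and Player~$i$ cycling through the children, exploiting that termination forces $A_j=U_n$ and $W_{1-i}^j=\emptyset$ over a full cycle so that either the play settles in some $W_i^{(m)}$ or escapes every maximal set of $\F_{1-i}$ infinitely often. The details you supply (monotonicity of the $U_j$, the trap properties justifying where each player can escape to) are exactly the facts the paper records around Figure~\ref{fig_alg} and in the two paragraphs following the theorem.
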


Zielonka's winning strategies are defined inductively: Player~$1-i$ plays
an attractor strategy to $U_{n-1}$ on $A_n\setminus U_{n-1}$ and on each $W_{1-i}^n$ 
according to the winning strategy computed recursively.
A play consistent with this strategy will from some point onwards be consistent 
with one of the winning strategies for some  $W_{1-i}^n$, hence it is winning for
Player~$1-i$.

Player~$i$ plays using a cyclic counter $c$: suppose $c=j$. In $W_1^{n-j}$, she plays according
to the winning strategy computed recursively. If Player~$1-i$ chooses to leave $W_1^{n-j}$,
then she starts playing an attractor strategy to reach $V\setminus\lbl(T_{n-j})$. Once
she has reached this set she increments $c$ modulo $k$ and begins again.
There are
two possibilities for a play consistent
with this strategy: if it stays from some point onwards in some $W^{n-j}_i$, then
it is winning by the inductive hypothesis. Otherwise, it will visit infinitely many
vertices in $V \setminus \lbl(\child(\ztree,j))$ for every $j$ in the range $0\le j <\branch(\ztree)$,
which implies that the infinity set of the play is not a subset of any 
$\lbl(\child(\ztree,j))$. Hence, it is in $\F_i$ and the play is indeed winning for
Player~$i$.

We conclude this section by showing that the winning strategies for 
Muller games as defined in \cite{Z98} do not bound the score of the opponent 
by a constant. 

\begin{lemma}\label{lem_unbounded}
There exists a family of Muller games $\game_n=(G_n,\F_0^n,\F_1^n)$ with 
$|G_n|=n+1$ and $|\F_0^n|=1$ such that $\maxscore_{\F_1^n}(\play(v,\sigma,\tau))=n$
where $\sigma$ is Zielonka's strategy, $v\in V$, and $\tau\in\Pi_1$.
\end{lemma}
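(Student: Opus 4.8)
The plan is to build the family $\game_n$ around the simplest winning condition with a single winning set for Player~$0$, namely $\F_0^n=\{V\}$ (so $\F_1^n=\pow{V}\setminus\{V\}$); under this condition Player~$0$ wins exactly when every vertex is visited infinitely often. I would take $V=\{0,1,\dots,n\}$, single out one vertex $q$, arrange the remaining $n$ vertices $F:=V\setminus\{q\}$ on a directed cycle, and attach $q$ so that the only genuine decision in the arena belongs to Player~$0$ and consists of continuing around the cycle or taking a short exit that forces a visit to $q$. Two things have to be checked about this arena: that it contains no proper trap for Player~$0$ (so that $\att{0}{V}{\{u\}}=V$ for every $u$ and hence Player~$0$ can force visiting all vertices infinitely often, making $\sigma$ a genuine winning strategy), and that its Zielonka tree is the \emph{flat} tree whose root is labelled $V$ and whose $n+1$ children are the leaves labelled by the co-singletons $V\setminus\{j\}$.

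Next I would read off Zielonka's strategy $\sigma$ on this flat tree. Since every child is a leaf with label in $\F_1^n$, each region $W_i^m$ that Player~$0$ could \emph{confine} to is empty, so $\sigma$ degenerates to the pure cyclic-counter behaviour: with counter value $c$ she plays an attractor strategy towards the single vertex $V\setminus\lbl(T)=\{j\}$ and increments $c$ modulo $k=n+1$ as soon as that vertex is reached. The effect is that $\sigma$ forces the token through the vertices $0,1,\dots,n$ in the fixed cyclic order dictated by the child ordering. By choosing the labelling of the cycle so that this target order runs \emph{against} the direction of the directed cycle, I can force every attractor phase aimed at a vertex of $F$ to wind almost all the way around the cycle, so that consecutive phases keep re-covering $F$ without visiting $q$.

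With this in place I would exhibit the bad play: start at a vertex of the cycle and follow $\sigma$ (the counter-strategy $\tau\in\Pi_1$ is irrelevant, as Player~$1$ has no score-affecting move). Between the two consecutive phases in which the counter targets $q$ — which happens once per cycle of the counter — the token winds around $F$ repeatedly, and a direct count of how many \emph{disjoint} completions of $F$ this suffix contains gives $\score_F=n$, whence $\maxscore_{\F_1^n}\ge n$. For the matching upper bound I would use that $\sigma$ is winning (Theorem~\ref{thm_correctness}): the counter guarantees a visit to every vertex, and in particular to $q$, within each counter cycle, which resets the score of $F$; a short argument via Lemma~\ref{lem_chain} then bounds the score of every set in $\F_1^n$ by $n$, giving $\maxscore_{\F_1^n}=n$.

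The main obstacle is the tension between the two requirements on the arena. Forcing long, repetitive traversals wants Player~$1$ or the geometry to be able to delay the play, but any genuine freedom to avoid progress would create a trap for Player~$0$ and destroy her win; conversely, making Player~$0$ win forces $\att{0}{V}{\{u\}}=V$ for all $u$, which makes each individual attractor phase short. The delicate point is therefore to engineer the cycle-plus-exit geometry together with the child (hence target) ordering so that the \emph{accumulated} winding between successive visits to $q$ yields exactly $n$ completions of $F$ rather than one fewer, and to verify this against the precise indexing $T_{n-j}$ used by Algorithm~\ref{alg_z}. Pinning down this exact count — ruling out an off-by-one in the number of disjoint completions — is where I expect most of the work to lie.
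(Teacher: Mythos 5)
Your construction is essentially identical to the paper's: the paper takes $V_n=\{0,\dots,n\}$ with all vertices owned by Player~$0$, a descending cycle on $F=\{1,\dots,n\}$ with the single exit $1\to 0$ and re-entry $0\to n$, $\F_0^n=\{V_n\}$ (so the Zielonka tree is flat with children $V_n\setminus\{j\}$ ordered by increasing $j$, against the cycle direction), and observes that the cyclic-counter attractor strategy winds through the loop $n,\dots,1$ exactly $n$ times between consecutive visits to $0$, giving $\score_{\{1,\dots,n\}}=n$. Your plan matches this in every essential respect (including the choice of witness set $F=V\setminus\{q\}$ and the observation that $\tau$ is irrelevant since $V_1=\emptyset$), so it is the same proof; the only remaining work is fixing the edge set and doing the count you already flagged.
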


\begin{figure}[h]
\centering
\begin{tikzpicture}[node distance=2cm,auto,every edge/.style={bend angle=20,draw,thick}]
\node[p0]			(0)	{$0$};
\node[p0, right of = 0]		(1)	{$1$};
\node[p0, right of = 1]		(2)	{$2$};
\node[right of = 2]		(dots)	{$\cdots$};
\node[p0, right of = dots]	(n)	{$n-1$};
\node[p0, right of =  n]	(n+1)	{$n$};

\path[-stealth]
(1) edge (0)
(2) edge (1)
(dots) edge (2)
(n) edge (dots)
(n+1) edge (n)
(0.north east) edge[bend left] (n+1.north)
(1.north east) edge[bend left] (n+1.north west);
\end{tikzpicture}
\caption{The arena $G_n$ for Lemma~\ref{lem_unbounded}.}
\label{pic_Gn}
\end{figure}
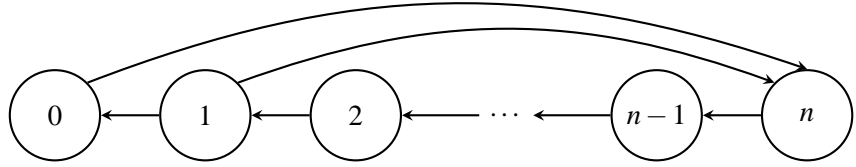

\begin{proof}
Let $G_n=(V_n,V_n,\emptyset,E_n)$ with $V_n=\{0,\ldots,n\}$, 
$E_n=\{(i+1,i)\mid i<n\}\cup\{(0,n),(1,n)\}$ (see Figure~\ref{pic_Gn}),
and $\F_0^n=\{V_n\}$.
The Zielonka tree for the winning condition $(\F_0^n,\F_1^n)$ has a root
labeled by $V_n$ and $n+1$ children that are leaves and are 
labeled by $V_n\setminus\{i\}$ for every 
$i\in V_n$. Assume, the children are ordered as follows: 
$V_n\setminus\{0\}<\cdots <V_n\setminus\{n\}$. Zielonka's strategy for 
$\game_n$, which depends on the ordering of 
the children, can be described as follows. Initialize
a counter $c:=0$ and repeat the following:
\begin{enumerate}
\item\label{start} Use an attractor strategy to move to vertex $c$.
\item Increment $c$ modulo $n+1$.
\item Go to \ref{start}.
\end{enumerate}
Now assume a play consistent with this strategy has just visited $0$.
Then, it visits all vertices
$1,\ldots,n$ in this order by cycling through the loop $n,\ldots,1$ 
$n$ times. Hence, the score for the set $\{1,\ldots,n\}$ is 
infinitely often $n$.
\end{proof}

By contrast, Player~$0$ has a positional winning strategy for $\game_n$ that bounds the
opponents scores by $2$.
The reason the strategy described above allows a high score for Player~$1$
is that it ignores the fact that, while it attracts the play to the vertex $0$,
it visits all other vertices. In the next section we will
construct a strategy that recognizes such visits. Thereby, the strategy is able to bound
the opponent's scores  by~$2$.

\section{Bounding the Scores in a Muller Game}
\label{sec_bound2}
In this section, we prove our main result: the finite-time Muller game 
with threshold $3$ is equivalent to a Muller game. 

\begin{theorem}
\label{thm_k3}
If $W_i$ is the winning region of Player~$i$ in a Muller game $(G,\F_0,\F_1)$,
and $W_i'$ is the winning region of Player~$i$ in the finite-time Muller
game $(G,\F_0,\F_1,3)$, then $W_i=W_i'$.
\end{theorem}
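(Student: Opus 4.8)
The plan is to prove the two inclusions $W_i \subseteq W_i'$ and $W_i' \subseteq W_i$ by exhibiting, for each player, a strategy in the Muller game that bounds the \emph{opponent's} maximum score below the threshold $3$, i.e., that keeps every opponent score at most $2$. Since the finite-time game with threshold $3$ stops as soon as some set reaches score $3$ and awards the play to the owner of that set, a strategy for Player~$i$ that prevents any set $F \in \F_{1-i}$ from ever reaching score $3$ is automatically a winning strategy in the finite-time game: the only way the play can terminate is with a score-$3$ set belonging to $\F_i$. By the determinacy of both games (the Muller game by Theorem~1, the finite-time game by Zermelo), it in fact suffices to treat one player, say to show that Player~$i$, winning from $W_i$ in the Muller game, has a strategy bounding the opponent's scores by~$2$; applying this symmetrically to both players and intersecting with determinacy yields $W_i = W_i'$.

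The core construction, and the main obstacle, is to define such a score-bounding strategy using the internal structure computed by Zielonka's algorithm (Section~\ref{sec_alg}). Lemma~\ref{lem_unbounded} shows that Zielonka's \emph{naive} strategy does \emph{not} bound scores by a constant, so I would refine it. The idea is to exploit the recursive decomposition: when Player~$i$ plays the cyclic-counter strategy, a high score for a set $F \in \F_{1-i}$ can only build up inside a region $W_i^{n-j}$ whose vertices lie in $\lbl(T_{n-j}) = \lbl(\child(\ztree, j))$, together with excursions through the attractor to $V \setminus \lbl(T_{n-j})$. The key insight flagged in the text is that every time the strategy forces the play to visit $V \setminus \lbl(T_{n-j})$, it leaves the current candidate set, resetting any score for a set contained in $\lbl(T_{n-j})$. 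The refinement must ensure that between two consecutive visits to $V \setminus \lbl(T_{n-j})$, the play cannot accumulate more than a bounded amount of score for any opponent set; the crucial point is that the opponent set $F$ owned by Player~$1-i$ must (by the Zielonka tree structure) be contained in some child label, so its score is controlled by how often we break out of that child.

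Concretely, I would proceed by induction on the height of the Zielonka tree $\ztree$, mirroring the recursive structure of the algorithm. In the base case (root with no children) one player wins trivially and no opponent set ever achieves positive score for the relevant family. For the inductive step, I would combine three ingredients whose interaction is delicate: (1)~inside each $W_i^{n-j}$, the recursively-defined strategy already bounds the opponent's scores for subsets of $\lbl(T_{n-j})$ by~$2$ on that subarena; (2)~the attractor excursions to $V \setminus \lbl(T_{n-j})$ must be analyzed so that each such excursion increments a set's score by at most one before a reset occurs; and (3)~Lemma~\ref{lem_chain} and Lemma~\ref{lem_uniqueF} must be used to argue that the sets with high score form a chain and that the unique maximal candidate is exactly an opponent set confined to a child label. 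The scoring bookkeeping — showing that a single cycle of the counter contributes at most one to the relevant score, and that crossing $V \setminus \lbl(T_{n-j})$ genuinely resets the accumulator $\acc_F$ for the dangerous sets $F$ — is where the real work lies.

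The hardest part will be controlling the \emph{interaction} between the counter advancing through different children and the score of a single fixed opponent set: when the counter moves from child $j$ to child $j+1$, the play may re-enter a region where $F$ again grows, and I must show these contributions cannot stack up to $3$ without the play having passed through a vertex outside $F$ in a way that resets $\score_F$. Establishing this reset behaviour rigorously — i.e., that the forced visit to $V \setminus \lbl(T_{n-j})$ always lies outside any opponent set whose score we are trying to bound, for \emph{every} such set simultaneously — is the technical crux, and I expect it to require careful use of the chain property from Lemma~\ref{lem_chain} together with the observation that $\occ(\rho) \subseteq V$ and the trap/attractor properties of the Lemma in Section~\ref{sec_def}.
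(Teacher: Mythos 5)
Your high-level reduction is exactly the paper's: prove that each player has a winning strategy on her Muller-game winning region that keeps every opponent score at most $2$, observe that such a strategy wins the threshold-$3$ finite-time game (the play must terminate by Lemma~\ref{bounded_paths}, and the terminating score-$3$ set cannot belong to the opponent), and conclude $W_i = W_i'$ from determinacy of both games. The induction on the height of the Zielonka tree, following the decomposition computed by Algorithm~\ref{alg_z}, is also the paper's route. But the proposal stops short of the two ideas that actually carry the proof, and you flag the hardest step without resolving it.

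First, you still describe the strategy for the player owning the root label as advancing a cyclic counter through the children ("when the counter moves from child $j$ to child $j+1$"). That is precisely the strategy Lemma~\ref{lem_unbounded} refutes: with a fixed cyclic order, a set $F$ not contained in the label of the currently targeted child can keep accumulating score across the attractor excursion, because the forced exit from $\lbl(T_j)$ need not leave $F$. The paper's fix is to replace the cyclic counter by a choice driven by the indicator set $\ind(w)$ --- the union of all opponent sets with positive score together with all non-empty accumulators. By Lemma~\ref{lem_chain} these form a chain, so $\ind(w)$ is contained in some maximal opponent set and hence in the label of some child $T_j$; choosing that child guarantees that the forced visit to $V \setminus \lbl(T_j)$ simultaneously resets \emph{every} currently dangerous set. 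Your stated hope that "the forced visit \ldots always lies outside any opponent set whose score we are trying to bound, for every such set simultaneously" is exactly what fails without this history-dependent child selection, and the chain property alone does not rescue the cyclic version. Second, the induction hypothesis must be strengthened: because the play may already have spent time inside a region $W_1^j$ (during an attractor phase) before the recursive strategy for that region takes over, the recursive strategies must be proven correct for plays starting with an arbitrary prefix satisfying a "burden" condition ($\maxscore \le 2$, and every set has either score $0$, or score $1$ with empty accumulator), and one must show this invariant is re-established at every change point (Lemma~\ref{lem_cp}). Without both of these ingredients the inductive step does not close, so the proposal as written has a genuine gap at its acknowledged crux.
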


To prove Theorem~\ref{thm_k3} we use the following approach. If 
$\maxscore_{\F_{1-i}}(\rho) \le 2$ for an infinite play $\rho$, then there exists a prefix $w$
of $\rho$ that is winning for Player~$i$ in
the finite-time Muller game with threshold~$3$. Hence, if a winning strategy for Player~$i$
in the Muller game
bounds the scores of her opponent by $2$, then this strategy is also winning
for the finite-time Muller game with threshold $3$. 
We will show that such a winning strategy exists. Theorem~\ref{thm_k3} then follows by
determinacy of Muller games. Therefore, the rest of this section will be dedicated to
proving the following lemma.

\begin{lemma}
\label{lem_k3}
Player~$i$ has a winning strategy $\sigma$ for her winning region $W_i$ 
in a Muller game $\game=(G, \F_0,\F_1)$ such that
$\maxscore_{ \F_{1-i} }(\play(v, \sigma, \tau)) \le 2$ for every vertex~$v \in
W_i$ and every $\tau\in\Pi_{1-i}$.
\end{lemma}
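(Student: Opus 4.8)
The plan is to prove Lemma~\ref{lem_k3} by induction on the structure of the Zielonka tree $\ztree$, mirroring the recursive structure of Zielonka's algorithm (Algorithm~\ref{alg_z}), but carefully refining Player~$i$'s strategy so that it controls the opponent's scores. The base case is the leaf case: if the root of $\ztree$ has no children, then $W_i = V$ and the opponent's winning condition $\F_{1-i}$ restricted to $V$ is empty (no subset of $V$ lies in $\F_{1-i}$), so every set $F$ that could ever achieve a positive score belongs to $\F_i$, and the bound $\maxscore_{\F_{1-i}} \le 2$ holds vacuously. For the induction step I would take Zielonka's strategy for $W_i$ as computed by the algorithm, with its cyclic counter $c$ ranging over $0 \le c < k = \branch(\ztree)$, and analyze the scores of sets $F \in \F_{1-i}$ along a play $\rho = \play(v,\sigma,\tau)$ consistent with this strategy.

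The key observation driving the argument is the following. Any set $F \in \F_{1-i}$ that achieves a positive score must be a subset of some maximal set in $\F_{1-i}$, hence a subset of some $\lbl(T_{n-j}) = \lbl(\child(\ztree, j))$. So fix such an $F$ and the corresponding counter value $j$. As long as the play remains inside $W_i^{n-j}$, the recursively computed strategy applies, and by the inductive hypothesis applied to the subgame $(G[Y_{n-j}], T_{n-j})$ the score of $F$ there is bounded by~$2$. The danger is that the score of $F$ could grow across the boundary when the play leaves $W_i^{n-j}$: when Player~$1-i$ leaves, Player~$i$ switches to an attractor strategy that forces the play to visit $V \setminus \lbl(T_{n-j})$, i.e., a vertex \emph{outside} $F$. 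The crucial point is that reaching a vertex outside $F$ \emph{resets} the accumulator for $F$ (by the definition of $\acc_F$ and $\score_F$), so each ``episode'' inside $W_i^{n-j}$ contributes at most one increment beyond what was already counted, and the accumulator is cleared between episodes. I would make this precise using Lemma~\ref{lem_chain}: the sets with positive score and their accumulators form a chain, which tightly constrains how the partial progress toward a score increase for $F$ is carried over.

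The main obstacle I anticipate is bounding the score increment that can occur \emph{during} a single attractor phase and at the moment of transition. When Player~$i$ attracts the play to $V \setminus \lbl(T_{n-j})$, the intermediate vertices visited during this attractor phase could all lie in $F$, so the accumulator for $F$ may already be partially or fully filled just before the reset, and the reset only happens upon reaching the vertex outside $\lbl(T_{n-j})$. I would therefore argue that between two consecutive visits to a vertex outside $F$, the score of $F$ can increase by at most one (each full traversal of $F$ counts once and then forces a reset on the next outside-vertex), and that Player~$i$'s strategy guarantees such an outside visit happens before the score can climb to~$3$. Combining the inductive bound of~$2$ inside the subgame with the single boundary increment, and accounting for the partial accumulator carried in from the previous episode (at most one ``half-completed'' pass), yields the bound of~$2$ overall; pushing a third completion would require a full pass without any intervening outside vertex, which Zielonka's counter strategy prevents. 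Once Lemma~\ref{lem_k3} is established, Theorem~\ref{thm_k3} follows immediately: the bounding strategy $\sigma$ witnesses $W_i \subseteq W_i'$, and by determinacy of both the Muller game (Theorem~\ref{thm_correctness}) and the finite-time game, the winning regions coincide.
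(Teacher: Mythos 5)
Your base case is fine, and your instinct to induct over the Zielonka tree while tracking accumulators is the right starting point, but the induction step as you describe it does not work: you propose to use Zielonka's strategy itself, with its cyclic counter, and to argue that it already forces a visit outside $F$ before the score of $F$ can reach $3$. The paper's Lemma~\ref{lem_unbounded} exhibits a family of games $\game_n$ on $n+1$ vertices where exactly this strategy lets the opponent's set $\{1,\dots,n\}$ reach score $n$: while the counter cycles through the children labelled $V_n\setminus\{0\},\dots,V_n\setminus\{n\}$, each attractor phase traverses $\{1,\dots,n\}$ completely once more without ever touching the vertex $0$ lying outside it, so the score climbs by one per phase until the counter wraps around. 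Your key claim that ``between two consecutive visits to a vertex outside $F$, the score of $F$ can increase by at most one'' is also false as stated: the score counts how many times $F$ is traversed entirely within the maximal suffix avoiding $V\setminus F$, so a segment lying inside $F$ can contribute arbitrarily many increments (e.g.\ $\score_{\{1,2\}}(0\,1\,2\,1\,2\,1\,2)=3$); a full traversal does not by itself force any reset.

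The missing idea is that the child-selection rule must be changed. When the strategy has to pick the next child $T_j$ of the root, it must pick one whose label contains the indicator set $\ind(w)$, i.e.\ the union of all opponent sets with positive score together with all non-empty accumulators (these form a chain by Lemma~\ref{lem_chain}, so such a child exists). Only then does completing the attractor phase to $V\setminus\lbl(T_j)$ reset every score that was in danger of growing, re-establishing at each change point the invariant the paper calls a \emph{burden}: every opponent set has score $0$, or score $1$ with empty accumulator. A second ingredient you omit is that the recursively computed strategies must be fed the relevant suffix of the actual history rather than restarted from scratch, since the play may already have accumulated partial progress inside a subregion before the sub-strategy takes over; accordingly the induction hypothesis has to be strengthened to cover plays that begin with an arbitrary burden, not just a single vertex. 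Finally, the two players' situations are not symmetric: the counter-based argument applies only to the player whose winning condition contains the root label, while the other player's bound comes from the nested traps $U_0\subseteq U_1\subseteq\cdots$ and the fact that attractor vertices are visited at most once, handled via the two cases of Lemma~\ref{lem_burden}.
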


We will use the internal structure of the winning regions as
computed by Zielonka's algorithm to give an
inductive proof of this claim. Traditionally, 
Zielonka's strategies forget the history of the play every time they switch
between an attractor strategy and a recursively computed winning strategy. For
example, suppose that a play $w$ spends some time in $W_1^{n-j}$
before Player~0 decides to move out of the set $W_1^{n-j}$. Player~$1$ responds
to this by playing the attractor strategy to the set $V \setminus \lbl(T_{n -
j})$ in order to reach some vertex $v \in V \setminus \lbl(T_{n - j})$. If
$v \in W_1^{n-j+1}$, then Player~$1$ will play the winning strategy for the set
$W_1^{n-j+1}$ starting at the vertex $v$.

Note that the play $w$ may have spent a significant number of steps in
$W_1^{n-j+1}$ (while playing according to the attractor strategy)
before Player~$1$ begins to play the winning strategy for that
set. Yet in Zielonka's strategy, Player~$1$ will behave as if the first vertex
visited in $W_1^{n-j+1}$ is $v$. In other words, the suffix of $w$ that is
contained in $W_1^{n-j+1}$ is effectively forgotten by the strategy. 

This
fact is irrelevant if we are only concerned with constructing a winning
strategy, but when we want to construct strategies that guarantee certain scores
are bounded by~$2$, the entire suffix of $w$ must be retained in this kind of
situation.
This motivates the following definition of a play. A play begins with a finite
prefix over which the players have no control, and then
continues as a normal play would. The key difference is that the strategies may
base their decisions on the properties of the prefix.

\begin{definition}[Play]
For a non-empty finite path $w=w_0\cdots w_n$ and strategies $\sigma\in\Pi_i$, 
$\tau\in\Pi_{1-i}$, we define the infinite play $\play(w,\sigma,\tau)=\rho_0 
\rho_1\rho_2\cdots$ inductively by $\rho_j=w_j$ for $0\le j \le n$ and for 
$j>n$ by
\[\rho_{j}=\begin{cases}
\sigma(\rho_0\cdots\rho_{j-1}) & \text{if } \rho_{j-1} \in V_i\\
\tau(\rho_0\cdots\rho_{j-1})   & \text{if } \rho_{j-1} \in V_{1-i}
\end{cases}\enspace .\]
\end{definition}

In fact, the finite paths that are passed to our strategies will not be totally
arbitrary. As described previously, these paths arise out of decisions made
before the strategy was recursively applied. Therefore, we have some control
over the form that these paths take. We will construct our strategy so that
every path passed to a recursive strategy has the following property.

\begin{definition}[Burden]
Let $\F\subseteq 
\pow{V'}$. A finite path $w$ is an $\F$-burden if $\maxscore_{\F}(w)\le 2$ and
for every $F\in\F$ either $\score_F(w)=0$ or $\score_F(w)=1$ and
$\acc_F(w)=\emptyset$.
\end{definition}

We are now ready to prove by induction over the height of the Zielonka tree that both
players have a strategy to bound their opponent's scores by $2$ on their winning regions,
even if the play starts with a burden.
We begin by considering the base
case, which is when the Zielonka tree is a leaf. For the rest of this section we
will assume $\lbl(\ztree)\in\F_1$. Otherwise, swap the roles of Player~$0$ and
$1$ below. 

\begin{lemma}\label{lem_leaf}
Let $(G,\F_0,\F_1)$ be a Muller game with vertex set $V$ such that
$\ztree$ is a leaf. Then, Player~$1$ has a strategy $\tau$ such that 
$\maxscore_{\F_{0}}(\play(wv, \sigma, \tau))
\le 2$ for every strategy $\sigma \in \Pi_{0}$ and every 
$\F_0$-burden $wv$ with $v\in V$.
\end{lemma}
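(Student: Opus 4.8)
The plan is to exploit the fact that the leaf case is completely degenerate. Since $\lbl(\ztree)=V'\in\F_1$, the index $i$ with $V'\in\F_i$ equals $1$, so by the definition of the Zielonka tree the children of the root are exactly the $\subseteq$-maximal sets of $\F_0$. Because $\ztree$ is a leaf, the root has no children; as $\F_0\subseteq\pow{V'}$ is a finite family and every nonempty finite family of sets possesses a $\subseteq$-maximal element, this forces $\F_0=\emptyset$. This single structural observation is really the heart of the matter.

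From $\F_0=\emptyset$ the conclusion is immediate and, in fact, holds for \emph{every} strategy. There is simply no set $F\in\F_0$ whose score could ever be examined: for each prefix $u$ of $\play(wv,\sigma,\tau)$ the value $\maxscore_{\F_0}(u)$ is a maximum over the empty family, so (taking the convention $\max\emptyset=0$, equivalently: no set of $\F_0$ can reach score $3$) we obtain $\maxscore_{\F_0}(\play(wv,\sigma,\tau))=0\le 2$. Accordingly I would simply let $\tau$ be an arbitrary positional strategy for Player~$1$, which exists because every vertex has an outgoing edge; the strategy need not react to the burden prefix $wv$ at all. The hypothesis that $wv$ is an $\F_0$-burden is likewise vacuous once $\F_0=\emptyset$, so it imposes no constraint.

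I do not anticipate any genuine obstacle here: this lemma is the trivial anchor of the induction on the height of $\ztree$, and all of the real work is deferred to the inductive step, where the root has children and the cyclic-counter/attractor behaviour must actually be controlled. The only points that merit a line of care are bookkeeping ones: justifying ``no children $\Rightarrow\F_0=\emptyset$'' via finiteness of $\pow{V'}$, and fixing the convention for a maximum over an empty family (easily sidestepped by instead bounding each individual $\score_F$ by $0$ for $F\in\F_0$). Finally, although the statement only asks for the score bound, it is worth recording for the downstream induction that the chosen $\tau$ is in fact winning for Player~$1$: every play $\rho$ in $G$ satisfies $\infi(\rho)\subseteq V\subseteq V'$, hence $\infi(\rho)\in\pow{V'}=\F_1$, so Player~$1$ wins unconditionally. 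This is what allows the leaf case to slot cleanly into the larger argument.
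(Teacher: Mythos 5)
Your proposal is correct and follows essentially the same route as the paper: observe that a leaf with $\lbl(\ztree)\in\F_1$ forces $\F_0=\emptyset$, whence $\maxscore_{\F_0}$ is vacuously bounded for any strategy $\tau$. The extra details you supply (finiteness justifying ``no children $\Rightarrow\F_0=\emptyset$'', the empty-maximum convention, and the remark that $\tau$ is winning) are all consistent with, and merely elaborate on, the paper's two-line argument.
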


\begin{proof}
As $\ztree$ is a leaf and $\lbl(\ztree)\in \F_1$ by assumption, we have 
$\F_0=\emptyset$. Hence, any strategy $\tau$ for Player~$1$ guarantees 
$\maxscore_{\F_{0}}(\play(w, \sigma, \tau)) \le 2$.
\end{proof}

We now move on to the inductive step of the proof. We will give two versions of
the inductive step, one case will be for the set $W_0$ and the other will be for
the set $W_1$. We will consider the case for the set $W_0$ first.

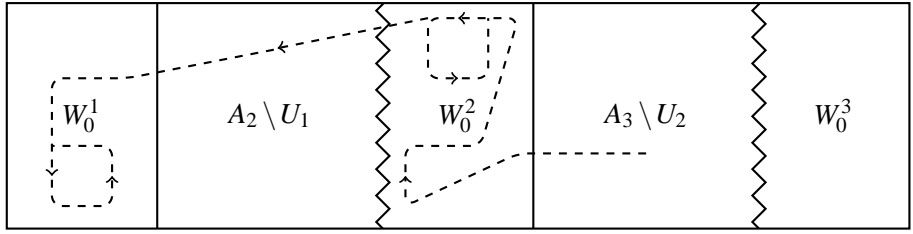
\begin{figure}[h]
\centering
\begin{tikzpicture}
\draw[thick] (0,0) rectangle (12,3);

\begin{scope}
\clip (0,0) rectangle (12,3);

\draw[thick] {(2,0) -- (2,3)};
\draw[thick] decorate [decoration =zigzag] {(5,0) -- (5,3)};
\draw[thick] {(7,0) -- (7,3)};
\draw[thick] decorate [decoration =zigzag] {(10,0) -- (10,3)};

\node at (1,1.5) {$W^1_0$};
\node at (3.5,1.5) {$A_2\setminus U_1$};
\node at (6,1.5) {$W^2_0$};
\node at (8.5,1.5) {$A_3\setminus U_2$};
\node at (11,1.5) {$W^3_0$};

\draw[thick,rounded corners, dashed,->](8.5,1) -- (6.8,1) -- (5.3,.3) -- (5.3,.7);
\draw[thick,rounded corners, dashed,->](5.3,.75) -- (5.3,1.1) -- (6.3,1.1) -- (6.8,2.8) --(6,2.8);
\draw[thick,rounded corners, dashed,->] (5.95,2.8) -- (5.6,2.8) -- (5.6,2.0) -- (6,2);
\draw[thick,rounded corners, dashed] (6.05,2) -- (6.4,2.0) --(6.4,2.8);
\draw[thick,rounded corners, dashed,->] (5.6,2.8) -- (3.6,2.4);
\draw[thick,rounded corners, dashed,->] (3.55,2.38) -- (1.6,2.0) -- (.6,2) -- (.6,.7);
\draw[thick,rounded corners, dashed,->] (.6,.65) -- (.6,.3) -- (1.4,.3) -- (1.4,.7);
\draw[thick,rounded corners, dashed] (1.4,.75) -- (1.4,1.1) -- (.6,1.1);
\end{scope}
\end{tikzpicture}
\caption{The structure of $W_0$. The dashed line indicates a play.}
\label{fig_pl0}
\end{figure}

The situation in this case is shown in Figure~\ref{fig_pl0}. Our strategy for
this case will be the same as Zielonka's strategy, but it must also deal with
the finite path that has been passed to it. We denote the attractor strategy for
Player~$0$ on $A_n\setminus U_{n-1}$ by $\sigma_n^{\text{A}}$ and we denote the
recursively computed strategy for Player~0 on $G[W^{n}_0]$ as
$\sigma_n^{\text{R}}$. We can assume that $\sigma_n^\text{R}$ satisfies the
inductive hypothesis, which means that $\maxscore_{\F_1\restriction
W^{n}_0}(\play(wv, \sigma^{\text{R}}_n, \tau)) \le 2$ for every strategy $\tau$
for Player $1$ in $G[W^{n}_0]$ and every $\F_1\restriction W^{n}_0$-burden $wv$
with $v\in W^{n}_0$. We define the following strategy $\sigma^*$ for $W_0$:
\begin{equation*}
\sigma^{*}(wv) = \begin{cases}
  \sigma_n^{\text{R}}(w'v) & \text{if $v \in W^{n}_0$ and $w'$ is the 
                           longest suffix of $w$ with $\occ(w')
                           \subseteq W^n_0$} \\
  \sigma_n^{\text{A}}(v)   & \text{if $v \in A_n \setminus U_{n-1}$} 
\end{cases}\enspace .
\end{equation*}

Our strategy chooses to use $\sigma_n^{\text{R}}$ or
$\sigma_n^{\text{A}}$ precisely when Zielonka's strategy chooses to do so.
The difference is that our strategy is careful to pass the appropriate finite
path to the recursively computed strategy $\sigma_n^{\text{R}}$.

The sets $U_j$ form a sequence of nested traps for Player~$1$. Therefore, if Player~$1$ 
chooses to leave some $U_j\setminus U_{j-1}$ and Player~$0$ plays according to $\sigma^*$, the play 
can never return to $U_j\setminus U_{j-1}$. This implies that a play that has left some $W_0^j$ 
will never return. Also, every vertex in $A_j\setminus U_{j-1}$ can be seen at most once,
as $\sigma^*$ behaves like an attractor strategy on these vertices.
The next lemma will be used to deal with cases that arise from these observations.

\begin{lemma}
\label{lem_burden}
Let $w$ be an $\{F\}$-burden,
let $v,v'\in F$.
\begin{enumerate}
\item\label{lem_burden1} Let $\rho$ be an infinite play
in which $v$ appears at most once. Then, $\maxscore_{\{F\}}(w\rho)\le 2$. 
\item\label{lem_burden2} Let $\rho$ be an infinite play such that $v$ is never 
visited after $v'$ was visited for the first time. Then, $\maxscore_{\{F\}}(w\rho)\le 2$.
\end{enumerate}
\end{lemma}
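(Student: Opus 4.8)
The plan is to prove both parts simultaneously by contradiction, using a single structural observation about where the rounds witnessing a high score must lie. I would first recall the incremental dynamics of the scoring functions: reading a vertex outside $F$ resets $\score_F$ to $0$ and $\acc_F$ to $\emptyset$, while reading a vertex in $F$ enlarges the accumulator and, exactly when the accumulator first becomes all of $F$, increments $\score_F$ by one and resets $\acc_F$ to $\emptyset$. Equivalently, as in the prose preceding the definition, $\score_F(u)$ counts the maximal number of consecutive disjoint blocks, each with occurrence set $F$, that form a suffix of $u$ --- the ``rounds'' completed since the last visit to $V\setminus F$. I will refer to these greedily determined blocks as rounds.

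Now assume toward a contradiction that $\maxscore_{\{F\}}(w\rho)\ge 3$, and let $u$ be the shortest prefix of $w\rho$ with $\score_F(u)=3$. The maximal suffix of $u$ whose vertices all lie in $F$ then decomposes into three rounds $R_1R_2R_3$, each with occurrence set $F$, the last completing at the final vertex of $u$. The crux is to show that the last two rounds $R_2$ and $R_3$ lie entirely inside the $\rho$-part of $w\rho$. I would split on whether this $F$-suffix reaches back into $w$. If it does not, all three rounds lie in $\rho$ and the claim holds. If it does, I read off the scoring state at the $w$/$\rho$ boundary from the burden hypothesis: either $\score_F(w)=0$, so no round is completed inside $w$ and hence $R_1$ can only finish after the boundary, forcing $R_2,R_3\subseteq\rho$; or $\score_F(w)=1$ with $\acc_F(w)=\emptyset$, so exactly one round is completed at the boundary with no partial progress carried over, whence $R_2$ begins with the first vertex of $\rho$ and again $R_2,R_3\subseteq\rho$. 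The prefixes of $w\rho$ lying inside $w$ cannot already have score $3$, since $w$ is a burden and so $\maxscore_{\{F\}}(w)\le 2$.

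With $R_2,R_3\subseteq\rho$ in hand, both parts follow quickly. Since $\occ(R_2)=\occ(R_3)=F\ni v$, the vertex $v$ occurs in $\rho$ at least twice, contradicting the hypothesis of~(\ref{lem_burden1}). For~(\ref{lem_burden2}), note that $R_2$ precedes $R_3$ inside $\rho$ and $v'\in\occ(R_2)$, so the first occurrence of $v'$ in $\rho$ happens no later than the end of $R_2$, hence strictly before $R_3$; but $v\in\occ(R_3)$ then yields an occurrence of $v$ after the first $v'$, contradicting the hypothesis of~(\ref{lem_burden2}).

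I expect the main obstacle to be the structural claim itself --- specifically, making precise that the empty-accumulator condition in the definition of a burden is exactly what prevents a round from straddling the $w$/$\rho$ boundary. Without it, a round begun inside $w$ could be completed in $\rho$ using only a few $\rho$-vertices, so that only $R_3$ would be guaranteed to lie in $\rho$, and both counting arguments above would break down. Handling the boundary cleanly therefore reduces to the case analysis on the boundary state $(\score_F(w),\acc_F(w))$ permitted by the definition of an $\{F\}$-burden, together with the observation that the burden also bounds the score inside $w$ by~$2$.
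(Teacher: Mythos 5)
Your proof is correct and follows essentially the same route as the paper's: the same case split on the burden state $(\score_F(w),\acc_F(w))$, and the same key observation that a burden lets at most one scoring round be charged to $w$, so a score of $3$ would force two complete rounds with occurrence set $F$ inside $\rho$, contradicting the hypotheses on $v$ and $v'$. The paper phrases this as a direct forward bound (``the score increases at most once after $w$'', resp.\ ``can reach $2$ only by or after the first visit to $v'$'') rather than your contradiction via the explicit decomposition $R_1R_2R_3$, but the substance is identical.
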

\begin{proof}
For both statements, it suffices to show that $\score_F(wx)\le 2$
for every prefix $x$ of $\rho$. Let $w=w_0\cdots w_n$. We consider
the two cases given by the definition of a burden: 
\begin{itemize} 
\item $\score_F(w)=1$. As $\acc_F(w)=\emptyset$, we have 
$\score_F(w_0\cdots w_{n-1})=0$. Hence, the suffix $w_k\cdots w_n$ of $w$ witnessing 
$\score_F(w)=1$ is minimal. 
\begin{enumerate}
\item As $w_k\cdots w_n$ is minimal and as $v$ occurs at 
most once in $\rho$, we conclude that the score for $F$ increases at 
most once after the prefix $w$.

\item As the suffix is minimal, the score of
$F$ can increase to $2$ only by or after visiting $v'$ for the
first time. But $v$ is then never visited again. Hence, the score
for $F$ is bounded by $2$.
\end{enumerate}
\item $\score_F(w)=0$. Let $y$ be the shortest prefix of 
$\rho$ such that $\score_F(wy)=1$. If such a prefix does not exist,
then we are done.
\begin{enumerate}
\item Otherwise, $y^{-1}\rho$ does contain $v$ at most once.
Hence, the score for $F$ increases at most once after the prefix $wy$.

\item Again, if such a prefix exists, then the score for
$F$ can reach $2$ only by or after visiting $v'$ for the first time after $wy$.
But $v$ is then never visited again. Hence, the score
for $F$ is bounded by $2$.\qedhere
\end{enumerate}
\end{itemize}
\end{proof}

We are now able to prove the inductive step for Player~$0$, by applying the the
observations formalized in Lemma~\ref{lem_burden} to the structure of $W_0$.

\begin{lemma}\label{lem_strip}
We have 
$\maxscore_{\F_1\restriction W_0}(\play(wv, \sigma^{*}, \tau)) \le 2$ for every strategy $\tau
\in \Pi_{1}$ and every $\F_1\restriction W_0$-burden $wv$ with $v\in W_0$.
\end{lemma}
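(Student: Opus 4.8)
The plan is to prove the bound $\maxscore_{\F_1\restriction W_0}(\play(wv,\sigma^*,\tau))\le 2$ by tracking the structure of $W_0$ and applying Lemma~\ref{lem_burden} to each relevant set $F\in\F_1\restriction W_0$. The key structural facts, already assembled just before the lemma, are that the sets $U_j$ form nested traps for Player~$1$, so once a play leaves some $W_0^j$ it never returns, and that each vertex of an attractor region $A_j\setminus U_{j-1}$ is visited at most once under $\sigma^*$. Fix an arbitrary set $F\in\F_1\restriction W_0$ and an arbitrary prefix of $\play(wv,\sigma^*,\tau)$; I must show $\score_F\le 2$ on it. The natural case split is on where $F$ sits relative to the decomposition of $W_0$ into the pieces $W_0^1, A_2\setminus U_1, W_0^2,\ldots$ shown in Figure~\ref{fig_pl0}.

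First I would dispose of the case where $F$ is contained in some single recursive region $W_0^n$. Here $F\in\F_1\restriction W_0^n$, and by construction $\sigma^*$ feeds the recursive strategy $\sigma_n^{\text{R}}$ the correct suffix $w'$ of the play restricted to $W_0^n$; the crucial point to check is that this suffix is itself an $\F_1\restriction W_0^n$-burden, so that the inductive hypothesis $\maxscore_{\F_1\restriction W_0^n}(\play(w'v,\sigma_n^{\text{R}},\tau))\le 2$ applies and yields the desired bound directly. Next I would handle the case where $F$ straddles a boundary, i.e.\ $F$ is not contained in any one $W_0^n$. Because $F$ then meets an attractor region or meets two different pieces that the play cannot re-enter once left, I would exhibit a vertex $v\in F$ that is visited at most once (if $F$ meets an attractor region $A_j\setminus U_{j-1}$, use that every such vertex is seen only once) or exhibit vertices $v,v'\in F$ where $v$ is never visited after $v'$ (using that the play, moving leftward through the nested traps, cannot return to a piece once it has left it). In either subcase Lemma~\ref{lem_burden}, parts~\ref{lem_burden1} and~\ref{lem_burden2} respectively, gives $\maxscore_{\{F\}}(w\rho)\le 2$ provided $w$ is an $\{F\}$-burden, which follows from $wv$ being an $\F_1\restriction W_0$-burden by restricting to the single set $F$.

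The hard part, and the step I expect to be the main obstacle, is verifying that the suffix $w'$ passed to $\sigma_n^{\text{R}}$ in the recursive case is genuinely an $\F_1\restriction W_0^n$-burden rather than merely a path on which scores happen to stay low. This requires relating the scores of sets $F\subseteq W_0^n$ computed along the full play $wv\cdots$ to the scores computed along the extracted suffix $w'$, and confirming the accumulator condition $\acc_F(w')=\emptyset$ whenever $\score_F(w')=1$. The subtlety is that entering $W_0^n$ fresh after an excursion through an attractor region should \emph{reset} any partial progress toward an $F$-score, so that the definition of a burden is re-established at the entry point; I would argue this by noting that the vertex through which the play re-enters $W_0^n$, or the attractor excursion itself, visits a vertex outside $F$ and hence collapses the accumulator. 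Once this burden-preservation is nailed down, the inductive hypothesis and Lemma~\ref{lem_burden} close all cases and establish the stated bound for every $F\in\F_1\restriction W_0$, hence for the maximum over all such $F$.
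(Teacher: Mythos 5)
Your proposal follows essentially the same route as the paper's proof: the same structural observations (nested traps, attractor vertices visited at most once), the same three-way case split on whether $F$ meets an attractor region, straddles several $W_0^i$, or lies inside a single $W_0^j$, the same application of Lemma~\ref{lem_burden} to the first two cases, and the same appeal to the inductive hypothesis in the third. The burden-preservation issue you flag as the main obstacle is exactly the point the paper addresses, and your suggested resolution (the vertex through which the play enters $W_0^j$ is preceded by one outside $W_0^j$, which resets scores and accumulators of sets $F\subseteq W_0^j$) is the paper's argument.
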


\begin{proof}
Let $\rho=\rho_0\rho_1\rho_2\cdots= w^{-1}\play(wv,\sigma^*,\tau)$. Note that $\rho_0=v$,
which is the first vertex where the players get to choose a successor. Assume 
$\rho$ enters some $A_n\setminus U_{n-1}$. Then,
it will afterwards enter $U_{n-1}$ while seeing every vertex in 
$A_n\setminus U_{n-1}$ at most once, as $A_n$ is an attractor and
$\sigma^*$ behaves like an attractor strategy on $A_n\setminus U_{n-1}$.
Now assume $\rho$ enters some $W^n_0$. Then, it
will stay in $W^n_0$ until Player~$1$ decides to leave. However, his only 
choices are vertices in $A_{n-1}$, as $W^n_0$ is a trap for him in $V\setminus
A_{n-1}$. 
Hence, once a set $A_n\setminus U_{n-1}$ or $W^n_0$ is left, it will never be
entered again.

As $w\rho_0$ is an $\F_1\restriction W_1$-burden, it suffices to show 
$\score_F(w\rho_0\cdots \rho_n)\le 2$ for every $n > 0$ and every $F\in 
\F_1\restriction W_1$. We will consider several cases for $F$: remember that either 
$\score_F(w\rho_0)=0$ or $\score_F(w\rho_0)=1$ and $\acc_F(w\rho_0)=\emptyset$.
\begin{itemize}
\item $F\cap\left(\bigcup_{n\ge 1}\left(A_n\setminus U_{n-1}\right)
\right)\not=\emptyset$: Every vertex in $\bigcup_{n\ge 1}\left(A_n\setminus
U_{n-1}\right)$ occurs at most once in $\rho$. Hence, 
$\score_F(w\rho_0\cdots \rho_n)\le 2$ for every $n > 0$ by Lemma~\ref{lem_burden}.\ref{lem_burden1}.

\item $F\subseteq \bigcup_{n\ge 1}W^n_0$ with $F\cap W_0^i\not=
\emptyset$ and $F\cap W_0^j\not=\emptyset$ for $i<j$: $\rho$
cannot visit $W_j$ after it has visited $W_i$. Thus, $\score_F(w\rho_0\cdots \rho_n)
\le 2$ for every $n > 0$ by Lemma~\ref{lem_burden}.\ref{lem_burden2}.

\item $F\subseteq W^j_0$ for some $j$: If $\rho$ never visits $W^j_0$, 
then $\score_F(w\rho_0\cdots \rho_n)=0$ for every $n > 0$. So, assume 
$\rho$ enters $W^j_0$ at position $\rho_m$ for some $m \ge 0$.

Suppose  $m=0$: $w\rho_0$ is also an $\F_1\restriction W_0^j$-burden and
$w\rho_0\rho_1\rho_2\cdots$ is played according to $\sigma_j^{\text{R}}$ 
until Player~$1$
decides to leave $W^j_0$ at some position $p>m$.
Applying the inductive hypothesis
yields that
$\sigma_j^{\text{R}}$ guarantees
$\score_F(w\rho_0\cdots \rho_n)\le 2$ for every $n$ in the range $m \le n \le p$.
Should the play leave 
$W^j_0$, then $\score_F$ is reset to $0$ and stays $0$, as $W^j_0$ 
cannot be visited again. If Player~$1$ never leaves $W^j_0$, then the
scores are bounded by $2$ throughout the whole play.

If $m>0$, then $\score_F(w\rho_0\cdots \rho_n)=0$ for every 
$n < m$. Also, the play $\rho_m\rho_{m+1}\rho_{m+2}\cdots$
in $W^j_0$ starts with the $\F_1\restriction W_0^j$-burden $w\rho_0\cdots\rho_m$, 
(as $\rho_{m-1}\notin W_0^j$)
and the inductive hypothesis on $\sigma_j^{\text{R}}$
guarantees $\score_F(w\rho_0\cdots \rho_n) \le 2$ until $W^j_0$ is left, from which
point onwards $\score_F$ is always $0$.\qedhere
\end{itemize}
\end{proof}

We now turn our attention to the strategy for Player~$1$. For the rest of this
section $n$ will be the index at which Zielonka's algorithm terminated, and
$k= \branch(\mathcal{Z}_{\F_0, \F_1})$. The situation for Player~$1$ consists of
$k$ overlapping instances, one for each child, of the situation depicted
in Figure~\ref{fig_pl1}.

\begin{figure}[h]
\centering
\begin{tikzpicture}

\draw[thick] (0,0) rectangle (12,3);
\begin{scope}
\clip (0,0) rectangle (12,3);
\draw[thick] {(7,0) -- (7,3)};
\draw[thick] decorate [decoration =zigzag] {(0,1.8) -- (7,1.8)};
\node at (3.5,.9) {$W_1^{n-j}$};
\node at (3.5,2.5) {$\att{1}{X_{n-j}}{V\setminus \lbl(T_{n-j})}$};
\draw[thick,(-)] (7,-.3) -- (12,-.3);
\node at (9.5,1.5) {$(V\setminus \lbl(T_{n-j}))\cap X_{n-j}$};

\draw[thick,rounded corners, dashed,->] (6,1.5) -- (5,1) -- (4,1) -- (4,.6);
\draw[thick,rounded corners, dashed,->] (4,.55) -- (4,.2) -- (5,.2) -- (5,.6);
\draw[thick,rounded corners, dashed]    (5,.65) -- (5,1);
\draw[thick,rounded corners, dashed,->] (4,.2) -- (2,.4) -- (2,.9);
\draw[thick,rounded corners, dashed,->] (2,.95) -- (2,1.6) -- (2.7,1.6) -- (2.7,.9);
\draw[thick,rounded corners, dashed,->] (2.7,.85) -- (2.7,.4) -- (2,.4) -- (1.0,.9) -- (1.0,1.5);
\draw[thick,rounded corners, dashed,->](1.0,1.55) -- (1.0,2.1) -- (7.3,2.1);

\end{scope}
\end{tikzpicture}
\caption{The structure of $W_1$ with respect to $T_{n-j}$. The dashed line indicates a part of a play between two change points.}
\label{fig_pl1}
\end{figure}
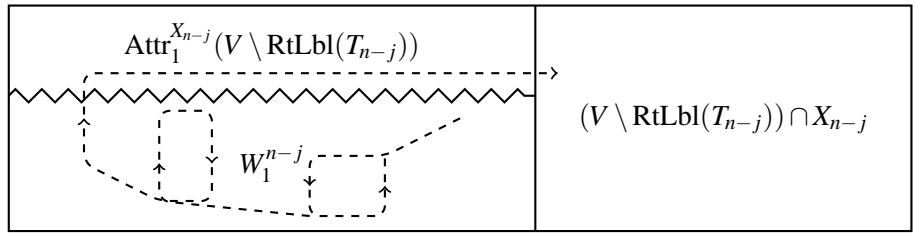

For the sake of convenience we define $Z_j = X_j \setminus \lbl(T_{j})$ for
every $j$ in the range $n-k< j \le n$. For every $j$ in the range $n - k < j \le
n$, we have an attractor strategy for Player~$1$ on $\att{1}{X_j}{Z_j}\setminus
Z_j$ which we call $\tau_j^\text{A}$, and we have a recursively computed winning
strategy $\tau_j^\text{R}$ for Player~$1$ on $G[W^{j}_1]$. Once again,
we can assume the inductive hypothesis holds for the strategy $\tau_j^\text{R}$,
which means that $\maxscore_{\F_0 \restriction W_1^j}(\play(wv, \sigma,
\tau_j^{\text{R}}) \le 2$ for every strategy $\sigma$ of Player~$0$ in
$G[W^{j}_1]$ and every $\F_0\restriction W_1^j$-burden $wv$ with  $v\in W_1^j$.
%Again, we have $W_1^j\subseteq \lbl(T_j)$ and
%therefore $\F_0\restriction W_1^j\subseteq \F_0\restriction \lbl(T_j)$.
%Let  be a winning strategy for 
%Player~$1$ for  with the following property: 

Our strategy improves the strategy given by Zielonka in the sense that it uses
a different method for choosing a new child of the root. Zielonka's
strategy works through the children in a cyclic order, which means that when the
play enters the set $V \setminus \lbl(T_j)$ the strategy will then move on to
the child $T_{j+1}$, and begin playing either $\tau_{j+1}^{\text{A}}$ or
$\tau_{j+1}^{\text{R}}$. By contrast, we will use a more careful method
for picking the next child of the root that will be considered.

Our method for picking the next child will make its decision based on which sets
of the opponent have either non-zero score or a non-empty accumulator.
For this purpose, we define the indicator function of a play $\ind\colon
V^+\rightarrow \pow{V}$ as
\begin{equation*}
\ind(w)=\bigcup_{\substack{F\in \F_0\colon\\ \score_F(w)>0}}F\,\cup
\bigcup_{\substack{F\in \F_0\colon\\ \acc_F(w)\not=\emptyset}}\acc_F(w)\enspace.
\end{equation*}
%
%$\ind(w)$ contains all the vertices that Player~$1$ should avoid in order to bound 
%$\maxscore_{\F_0}$ by $2$. Remember that the sets $\lbl(T_j)$ for $n-k<j\le n$ 
%are the labels of the children of the root of $\ztree$, i.e., the maximal 
%subsets of $V'$ in $\F_0$.

Recall that Lemma~\ref{lem_chain} implies that the sets we are considering form
a chain in the subset relation. This implies that the indicator function always
gives some subset of a set that belongs to the opponent. Therefore, we can argue that there
must always exist a child of the root whose label contains the indicator set.

\begin{lemma} 
For every $w$, there is some~$j$ in the range $n-k <  j \le n$ such that $\ind(w) 
\subseteq \lbl(T_j)$.
\end{lemma}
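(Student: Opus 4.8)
The plan is to exploit the chain structure from Lemma~\ref{lem_chain} to show that $\ind(w)$ is really a single set drawn from the opponent's family $\F_0$, and then to place that set inside a $\subseteq$-maximal element of $\F_0$, which is exactly a child label.

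First I would observe that every set contributing to the union that defines $\ind(w)$ --- each $F\in\F_0$ with $\score_F(w)>0$, and each non-empty $\acc_F(w)$ with $F\in\F_0$ --- is among the sets governed by Lemma~\ref{lem_chain}. Hence this finite collection is totally ordered by inclusion, so its union equals its $\subseteq$-largest member. If the collection is empty, then $\ind(w)=\emptyset$ and the claim holds trivially for any~$j$; otherwise $\ind(w)$ is precisely this largest member, which I will call $M$. I would then check that $M$ lies inside some set of $\F_0$: if $M=F$ for some $F\in\F_0$ this is immediate, while if $M=\acc_F(w)$ for some $F\in\F_0$ then the definition of the accumulator gives $\acc_F(w)\subseteq F$. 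In either case $\ind(w)=M\subseteq F$ for some $F\in\F_0$.

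Finally I would connect $\F_0$ to the child labels. Since $\F_0$ is finite, $F$ is contained in some $\subseteq$-maximal element of $\F_0$. By the standing assumption $\lbl(\ztree)\in\F_1$ we have $i=1$, so by the definition of the Zielonka tree the labels of the children of the root are exactly the $\subseteq$-maximal sets of $\F_{1-i}=\F_0$. As $j$ ranges over the $k$ consecutive indices with $n-k<j\le n$, the residue $j\bmod k$ ranges over all of $\{0,\dots,k-1\}$, so $\{T_j : n-k<j\le n\}$ enumerates all $k$ children of the root, and every $\subseteq$-maximal set of $\F_0$ occurs as some $\lbl(T_j)$. Choosing the index $j$ for which $\lbl(T_j)$ is the maximal set containing $F$ then yields $\ind(w)\subseteq F\subseteq\lbl(T_j)$, as required.

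I expect the only genuine subtlety to be the bookkeeping in this last step: verifying that the indices $n-k<j\le n$ really do sweep out all children of the root via the cyclic counter $n\bmod k$, and that the child labels are the maximal sets of $\F_0$ (the opponent's family) rather than of $\F_1$. The chain argument itself is an immediate consequence of Lemma~\ref{lem_chain}, and the accumulator containment $\acc_F(w)\subseteq F$ is read directly off the definition.
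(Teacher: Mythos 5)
Your argument is correct and follows essentially the same route as the paper's own proof: invoke Lemma~\ref{lem_chain} to see that $\ind(w)$ equals the largest set in the chain, observe that this set is contained in some $F\in\F_0$ (directly or via $\acc_F(w)\subseteq F$), and then use the definition of the Zielonka tree to find a child label containing $F$. Your additional bookkeeping about the indices $n-k<j\le n$ covering all residues modulo $k$ is a detail the paper leaves implicit but is correct.
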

\begin{proof}
Lemma~\ref{lem_chain} implies that there is a maximal set $G$ such 
that $\ind(w)=G$, with either $\score_G(w)>0$ or $\acc_F(w)=G$ for some 
$F\in\F_0$ with $G\subseteq F$. Hence, $\ind(w)\subseteq F$ for some 
$F\in\F_0$, and, by definition of $\ztree$, there is some child of the 
root labeled by $\lbl(T_j)$ such that $F\subseteq \lbl(T_j)$.
\end{proof}

When a new child must be chosen, our strategy will choose some child whose label
contains the value of the indicator function for the play up to that point. It is also
critically important that this condition is used when picking the child in the
first step. This is the part of the strategy where the finite initial path can
have an effect on the decisions that the strategy makes.

We can now formally define this strategy. We begin by defining an auxiliary function
that specifies which child the strategy is currently considering. We define
$c:W_1^*\rightarrow\{n - k + 1, \dots, n,\bot\}$ as $c(\epsilon)=\bot$ and
\begin{equation*}
c(wv)=\begin{cases} 
c(w) & \text{if }v \in \lbl(T_{c(w)})\\
j    & \text{if }v \notin \lbl(T_{c(w)})\text{, }\ind(wv) \ne \emptyset
       \text{ and } j \text{ minimal with }\ind(wv) \subseteq \lbl(T_{j}) \\
j    & \text{if }v \notin \lbl(T_{c(w)})\text{, }\ind(wv)=\emptyset 
       \text{ and } j \text{ minimal with } v\in \lbl(T_{j})\\
\bot & \text{if }v\not\in\bigcup_{n - k < j\le n}\lbl(T_{j})
\end{cases}\enspace.
\end{equation*}
Now we can define $\tau^{*}$ for $W_{1}$ as
\begin{equation*}
\tau(wv)^{*} =\begin{cases}
\tau^\text{R}_j(wv) & \text{if }c(wv)=j, v\in W^{j}_1 \text{ and } w' 
                      \text{ is the longest suffix of $w$ with } \occ(w')
                             \subseteq W^{j}_1\\
\tau^\text{A}_j(v)  & \text{if }c(wv)=j, v\in \lbl(T_j) \setminus W^j_1 \\
x                   & \text{if }c(wv)=\bot \text{ where $x\in W_1$ with $(v,x)\in E$}
\end{cases}\enspace.
\end{equation*}

We will now prove that this strategy has the required properties. Our proof will
use the concept of a change point, which is a position in a play where the $c$
function changes. More formally, suppose that
$\rho=\rho_0\rho_1\rho_2\cdots = w^{-1}\play(wv,\sigma,\tau^*)$ for some $\F_0$-burden $wv$ with $v\in W_1$ and
$\sigma\in\Pi_0$. Note that
$\rho_0=v$, which is the first vertex at which the players get to choose the
successor. We say that a position $r$ of $\rho$ is a change point, if $r=0$ or
if $c(w\rho_0\cdots\rho_{r-1})\not=c(w\rho_0\cdots\rho_r)$.

Let $x$ be a finite prefix of an infinite play that is consistent with $\tau$
such that the last position in $x$ is a change point. Moreover, assume that $x$
satisfies the burden property. Our strategy will pick some index~$j$ such that
$\ind(x) \subseteq \lbl(T_j)$. The play will then remain in the set $W_1^{j}$ until
Player~$0$ chooses to leave the set $W_1^{j}$, at which point the strategy
attracts to the set $V \setminus \lbl(T_j)$. Once such a vertex is reached,
the scores for all sets $F\in\F_0$ with $\score_F(x)>0$ are reset to $0$ and
the accumulator for $F$ is empty for every $F\in\F_0$ with $\acc_F(x)\not=\emptyset$.
While attracting the play to $V \setminus \lbl(T_j)$ the scores for other sets $F\in\F_0$
might rise and the accumulators fill up. However, as every vertex in the attractor is seen at
most one, we are able to show the following: if the play up to a change point is a
$\F_0$-burden, then the play up to the next change point is also a burden. As a $\F_0$-burden
bounds the scores of Player~$0$ be $2$, this suffices to prove that $\tau^*$ bounds Player $0$'s
scores by $2$.

% Once a vertex in $V \setminus
% \lbl(T_j)$ has been reached we can prove that every set $F \in \F_0$ with
% $\score_F(w) = 1$ must have score~$0$, and that every set~$F$ such that
% $\acc_F(w) \ne \emptyset$ either has score~$1$ and accumulator~$\emptyset$, or
% it still has score~$0$. Moreover, we can prove that no score for a set $F \in
% \F_0$ will reach score~$3$ during this process. \marginpar{better}

\begin{lemma}\label{lem_cp}
Let $\rho$ be as above and let $r<s$ be two change-points such that there 
exists no change point $t$ with $r<t<s$. If $w\rho_0 \cdots \rho_{r}$ is 
an $\F_0\restriction W_1$-burden, then so is $w\rho_0 \cdots \rho_{s}$. 
\end{lemma}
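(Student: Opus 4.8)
The plan is to dissect the segment of $\rho$ between the two consecutive change points $r$ and $s$ using the counter $c$ and the internal structure of $W_1$. Write $j=c(w\rho_0\cdots\rho_r)$. My first step is to pin down the shape of the segment $\rho_r,\ldots,\rho_s$: since $c$ does not change strictly between $r$ and $s$, the definition of $c$ forces $\rho_t\in\lbl(T_j)$ for all $r<t<s$, whereas $\rho_s\notin\lbl(T_j)$ because $c$ changes at $s$. From $r$ onwards Player~$1$ follows $\tau_j^{\text{R}}$ inside $W_1^j$ and $\tau_j^{\text{A}}$ on the attractor; using that $X_j$ is a trap for Player~$0$, that $W_1^j$ is a trap for Player~$0$ in $G[Y_j]$, and that $\tau_j^{\text{A}}$ attracts to $Z_j=X_j\setminus\lbl(T_j)$, I would show that the segment consists of an initial (possibly empty) stretch in $W_1^j$ followed by a stretch in $\att{1}{X_j}{Z_j}\setminus Z_j$ on which every vertex is seen at most once, that $W_1^j$ cannot be re-entered once it is left, and that $\rho_s$ is the first visit to $Z_j\subseteq V\setminus\lbl(T_j)$.

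The pivotal observation is the reset at $s$: as $\rho_s\notin\lbl(T_j)$, every $F\subseteq\lbl(T_j)$ has $\rho_s\notin F$, so $\score_F(w\rho_0\cdots\rho_s)=0$ and $\acc_F(w\rho_0\cdots\rho_s)=\emptyset$; this immediately discharges the per-set burden condition at $s$ for all such $F$. The lemma guaranteeing $\ind(w)\subseteq\lbl(T_j)$ for a suitable child, applied at $r$, moreover shows that any $F$ with $\score_F(w\rho_0\cdots\rho_r)>0$ satisfies $F\subseteq\lbl(T_j)$; equivalently, every $F\not\subseteq\lbl(T_j)$ already has score $0$ at $r$.

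It then remains to verify $\maxscore_{\{F\}}(w\rho_0\cdots\rho_t)\le 2$ for $r\le t\le s$, split into three cases. If $F\subseteq W_1^j$, I apply the inductive hypothesis to $\tau_j^{\text{R}}$: scores of subsets of $W_1^j$ depend only on the maximal suffix of the play lying in $W_1^j$, which is an $\F_0\restriction W_1^j$-burden by restricting the burden at $r$, so $\tau_j^{\text{R}}$ keeps $\score_F\le 2$ while the play is in $W_1^j$ and the score is $0$ once it has left; together with the reset at $s$ this gives the burden condition. If $F\subseteq\lbl(T_j)$ but $F\not\subseteq W_1^j$, then, since $F\subseteq W_1$ forces $F\cap A_j=\emptyset$ and $F\cap W_0^j=\emptyset$, the set $F$ lies in $(\att{1}{X_j}{Z_j}\setminus Z_j)\cup W_1^j$ and hence contains an attractor vertex $v$, which is visited at most once after $r$; Lemma~\ref{lem_burden}.\ref{lem_burden1} then bounds $\maxscore_{\{F\}}$ by $2$, and the reset at $s$ finishes this case. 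Finally, if $F\not\subseteq\lbl(T_j)$, pick $q\in F\setminus\lbl(T_j)$; since the play stays in $\lbl(T_j)$ throughout $[r,s-1]$, no block of $F$ can close before $s$, so $\score_F$ is $0$ on $[r,s-1]$ and at most $1$ at $s$, with empty accumulator whenever it equals $1$ (a score increase empties the accumulator). Collecting the three cases shows that $w\rho_0\cdots\rho_s$ is an $\F_0\restriction W_1$-burden.

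The step I expect to be the main obstacle is the structural first step: extracting, purely from the definition of $c$ and the trap and attractor properties, that the exit from $\lbl(T_j)$ at $s$ lands in $Z_j$ and that $W_1^j$ is entered at most once during the segment. The recursive case carries a second, subtler point: the inductive hypothesis must be invoked with the burden prefix taken to be the maximal $W_1^j$-suffix of the play, independently of how that suffix was actually produced by $\tau^{*}$; checking that this suffix really satisfies the burden property, and that it is exactly the history fed to $\tau_j^{\text{R}}$, is where the bookkeeping has to be done with care.
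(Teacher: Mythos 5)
Your decomposition is the paper's: split the segment between consecutive change points into a $W_1^j$-stretch followed by a stretch in $\att{1}{X_j}{Z_j}\setminus Z_j$ on which every vertex occurs at most once, use the reset at $\rho_s\notin\lbl(T_j)$, and case-split on how $F$ sits relative to $\lbl(T_j)$ and $W_1^j$; your cases (a) and (b) are sound, and (b) neatly merges the paper's two attractor subcases. One minor omission: you silently assume $c(w\rho_0\cdots\rho_r)\neq\bot$; the case $\rho_r\notin\bigcup_{j}\lbl(T_j)$ must still be treated (it is trivial, since then no vertex of any $F\in\F_0$ is visited before $\rho_s$, but it is a case of the lemma).

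The substantive gap is in your case $F\not\subseteq\lbl(T_j)$. Your justification that $\score_F$ stays $0$ on $[r,s-1]$ --- ``the play stays in $\lbl(T_j)$, so no block of $F$ can close'' --- is insufficient: a block witnessing a score increase at $t\in(r,s)$ is a suffix of the \emph{entire} word $w\rho_0\cdots\rho_t$, so it may reach back past $r$ and reuse a visit to $q\in F\setminus\lbl(T_j)$ made before the change point, as long as the play has stayed inside $F$ since that visit. Were this possible, the score could reach $1$ strictly before $s$ with a non-empty accumulator surviving to $s$, or even reach $2$ at $s$ when $F\setminus\lbl(T_j)=\{\rho_s\}$, and the burden property at $s$ would fail. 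What excludes it is that any such pending partial block is recorded in $\acc_F(w\rho_0\cdots\rho_r)$, and $\acc_F(w\rho_0\cdots\rho_r)\subseteq\ind(w\rho_0\cdots\rho_r)\subseteq\lbl(T_j)$ by the choice of $j$, so no vertex of $F\setminus\lbl(T_j)$ can be pending at $r$. You invoke the indicator only through its score half; its accumulator half is exactly what the paper uses at this point (its subcase $\score_F(w\rho_0\cdots\rho_r)=0$, $\rho_s\in F$). With that one observation added, your case analysis closes and the argument coincides with the paper's.
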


\begin{proof} 
From the definition of a change point we get 
$c(\rho_0\cdots\rho_t)=c(\rho_0\cdots\rho_r)$ for every $t$ in the range $r<t<s$.

If $c(w\rho_0\cdots\rho_r)=\bot$, then 
$\rho_t\not\in\bigcup_{n - k < j\le n}\lbl(T_{j})$, which implies $\rho_t\notin F$ for 
every $F\in\F_0$. Hence, we have $\score_F(w\rho_0\cdots\rho_{t})=0$ for 
every $r\le t< s$ and every $F\in\F_0$. Furthermore, we have either
$\score_F(w\rho_0\cdots\rho_{s})=0$, if $F\not=\{\rho_s\}$ and
$\score_F(w\rho_0\cdots\rho_{s})=1$ and $\acc_F(w\rho_0\cdots\rho_{s})=\emptyset$
otherwise.

Now, assume $c(w\rho_0\cdots\rho_r)=j$ for some $j$ in the range $n-k<j\le n$. Then,
there exists an $u$ in the range $r\le u\le s$ such that 
$\rho_r\cdots\rho_{u-1}$ is in $W_1^j$, $\rho_{u}\cdots\rho_{s-1}$ 
is in $\att{1}{X_j}{Z_j}\setminus Z_j$, and we have $\rho_s\notin \lbl(T_j)$.
Note that both parts could be empty.
The situation is depicted in Figure~\ref{fig_decomp} (cf. also Figure~\ref{fig_pl1}).

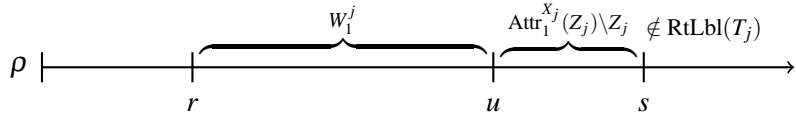
\begin{figure}[h]
\centering
\begin{tikzpicture}
\draw[->, thick] (0,0) -- (10,0);
\draw[thick] (2,-.2) -- (2,.2);
\draw[thick] (8,-.2) -- (8,.2);
\draw[thick] (6,-.2) -- (6,.2);
\draw[thick] (0,-.2) -- (0,.2);

\draw node at (-.3,0) {$\rho$};

\draw node at (2,-.5) {$r$};
\draw node at (8,-.5) {$s$};
\draw node at (6,-.5) {$u$};

\draw node at (4,.5) {$\overbrace{\hspace{3.8cm}}^{W_1^j}$};
\draw node at (7,.5) {$\overbrace{\hspace{1.8cm}}^{\att{1}{X_j}{Z_j}\setminus Z_j}$};
\draw node at (8.8,.5) {\footnotesize  $\notin \lbl(T_j)$};
\end{tikzpicture}
\caption{A part of a play between two change points.}
\label{fig_decomp}
\end{figure}

Furthermore, at positions $i$ in the range $r\le i \le u-1$, Player~$1$
plays according to $\tau^{\text{R}}_j$ and positions $i$ in the range 
$u\le i \le s-1$, he plays according to $\tau^{\text{A}}_j$. This implies
that every vertex in $\att{1}{X_j}{Z_j}\setminus Z_j$ is seen at most once 
in between $\rho_r$ and $\rho_s$,
i.e., in the infix $\rho_{u}\cdots\rho_{s-1}$.

Finally, let $\ind(w\rho_0\cdots\rho_r)=G$. If $G\not=\emptyset$, then 
$G\subseteq \lbl(T_j)$; otherwise, $\rho_r\in \lbl(T_j)$, both by definition
of $c$.

It suffices to show for every $F\in\F_0\restriction W_1$:
\begin{enumerate}
\item\label{ad1} If $\score_F(w\rho_0\cdots\rho_{r})=1$ and $\acc_F(w\rho_0\cdots\rho_{r})=
\emptyset$, then $\score_F(w\rho_0\cdots\rho_{t})\le 2$ for all $r<t<s$ and 
$\score_F(w\rho_0\cdots\rho_{s})=0$.
% and $\acc_F(w\rho_0\cdots\rho_{s})=\emptyset$ %%i think we dont need this!
%(here, we should have $F\subseteq V_i'$, where $i=c(\rho_0 \cdots\rho_{i})$). 

\item\label{ad2} If $\score_F(w\rho_0\cdots\rho_{r})=0$, then $\score_F(w\rho_0\cdots
\rho_{t})\le 2$ for all $r<t<s$ and either $\score_F(w\rho_0\cdots\rho_{s})=1$
 and $\acc_F(w\rho_0\cdots\rho_{s})=\emptyset$ or $\score_F(w\rho_0\cdots 
\rho_{s})=0$.
%(remember that $\rho_j$ is not in $V_i'$ where $i$ is defined as above).
\end{enumerate}
\ref{ad1}: As $\emptyset\not=F\subseteq 
\ind(w\rho_0\cdots\rho_r)$, we have $F\subseteq \lbl(T_j)$ and hence
$\score_F(w\rho_0\cdots\rho_s)=0$, as $\rho_s\in Z_j=X_j \setminus \lbl(T_{j})$.
It remains to show $\score_F(w\rho_0\cdots\rho_{t})\le 2$ for all $r<t<s$. 
We consider several cases for $F$:
\begin{itemize}
\item $F\cap Z_j\not=\emptyset$: as the vertices in $Z_j$ are not visited by
$\rho_{r}\cdots\rho_{s-1}$, the score of $F$ cannot increase in this interval.
\item $F\cap \att{1}{X_j}{Z_j}\setminus Z_j\not=\emptyset$: every vertex
in the attractor is seen at most once. Hence, we obtain 
$\score_F(w\rho_0\cdots\rho_{t})\le 2$ for all $r<t<s$ by Lemma~\ref{lem_burden}.\ref{lem_burden1}.
\item $F\subseteq W^j_1$: If $\rho_r\in\att{1}{X_j}{Z_j}\setminus Z_j$, then
$\score_F$ is reset to $0$ at $\rho_r$ and stays $0$ until $\rho_s$, hence, 
we have $\score_F(w\rho_0\cdots\rho_{t})\le 2$ for all $r<t<s$.

So, suppose $\rho_r\in W^j_1$. As $w\rho_0\cdots\rho_r$ is also an $\F_0\restriction W_1^j$-
burden and as 
$F\in \F_0\restriction W_1^j$, the inductive hypothesis on $\tau_1^j$ guarantees 
$\score_F(w\rho_0\cdots\rho_{t})\le 2$ for every $r<t<u$. As
$\rho_t\notin W^j_1$ for every $t$ in the range $u\le t < s$, we also have 
$\score_F(w\rho_0\cdots\rho_{t}) = 0$ for these positions.
\end{itemize}
\ref{ad2}: Let $G=\acc_F(w\rho_0\cdots\rho_r)\subseteq \ind(w\rho_0\cdots\rho_r)$.
Note that $G\subseteq \lbl(T_j)$, but it could be the case that $F\not\subseteq \lbl(T_j)$. 
Again, we consider several cases for $F$:
\begin{itemize}
\item If $\rho_s\in F$, then $\rho_s\notin \acc_F(w\rho_0\cdots\rho_r)$, as 
$\acc_F(w\rho_0\cdots\rho_r)\subseteq \lbl(T_j)$ and $\rho_s\notin \lbl(T_j)$. Hence,
 $\score_F$ stays $0$ at every position between $r$ and (excluding) $s$, as the vertex $\rho_s$
is never visited. If $\score_F(w\rho_0\cdots\rho_s)=1$, then $\acc_F(w\rho_0\cdots\rho_s)=
\emptyset$; otherwise $\score_F$ is $0$ at position $s$, too.  
\item If $\rho_s\notin F$, then $\score_F(w\rho_0\cdots\rho_{s}) = 0$. To bound the score between the positions $r$ and $s$ by $2$, we have to consider three
subcases: either $F\cap Z_j\not=\emptyset$, $F\cap \att{1}{X_j}{Z_j} \setminus Z_j\not=\emptyset$ 
or $F\subseteq W^j_1$. 
All cases can be solved by analogous reasoning to these cases in \ref{ad1}.\qedhere
\end{itemize}
\end{proof}

Now, to prove the inductive step for $W_1$, we simply need to observe that the
finite path ending at the first change point is a burden by assumption.

\begin{lemma}\label{lem_alt}
We have 
$\maxscore_{\F_0\restriction W_1}(\play(wv, \sigma, \tau^{*})) \le 2$ for every strategy $\sigma
\in \Pi_{0}$ and every $\F_0\restriction W_1$-burden $wv$ with $v\in W_1$.
\end{lemma}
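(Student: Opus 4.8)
The plan is to derive the lemma from Lemma~\ref{lem_cp} by an induction over the change points of the play. Write $\rho=\rho_0\rho_1\rho_2\cdots=w^{-1}\play(wv,\sigma,\tau^{*})$, so that $\rho_0=v$ and position $0$ is, by definition, a change point. Since $wv$ is an $\F_0\restriction W_1$-burden, every prefix of $wv$ already has score at most $2$; hence it suffices to prove $\score_F(w\rho_0\cdots\rho_t)\le 2$ for every $F\in\F_0\restriction W_1$ and every $t\ge 0$, which then bounds $\maxscore_{\F_0\restriction W_1}(\play(wv,\sigma,\tau^{*}))$ by $2$.

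First I would show that every prefix ending at a change point is an $\F_0\restriction W_1$-burden. Let $0=r_0<r_1<r_2<\cdots$ enumerate the change points. The base case is exactly the observation made just before the lemma: $w\rho_0=wv$ is an $\F_0\restriction W_1$-burden by hypothesis. For the inductive step, consecutive change points $r_m<r_{m+1}$ admit no change point strictly between them, so Lemma~\ref{lem_cp} applies and carries the burden $w\rho_0\cdots\rho_{r_m}$ to the burden $w\rho_0\cdots\rho_{r_{m+1}}$. The same lemma simultaneously gives $\score_F(w\rho_0\cdots\rho_t)\le 2$ for all $r_m\le t<r_{m+1}$. Consequently every position $t$ that lies at or before some change point is handled.

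It remains to treat a play with only finitely many change points, say with last change point $r$ and $c(w\rho_0\cdots\rho_r)=c_\infty$. If $c_\infty=\bot$, then by the definition of $c$ we have $\rho_t\notin\bigcup_{n-k<j\le n}\lbl(T_j)$ for every $t\ge r$, so $\rho_t$ lies in no set of $\F_0$ and $\score_F(w\rho_0\cdots\rho_t)=0$ for all such $t$ and all $F\in\F_0$. If $c_\infty=j$, then $\rho_t\in\lbl(T_j)$ for every $t>r$; since completing the attractor strategy $\tau_j^{\text{A}}$ would reach $Z_j\subseteq V\setminus\lbl(T_j)$ and hence force a further change point, the play must in fact remain inside $W_1^{j}$ from position $r$ onwards, where $\tau^{*}$ behaves like $\tau_j^{\text{R}}$. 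For a set $F\subseteq W_1^{j}$ the inductive hypothesis on $\tau_j^{\text{R}}$, fed the $\F_0\restriction W_1^{j}$-burden $w\rho_0\cdots\rho_r$, bounds $\score_F$ by $2$; for a set $F\not\subseteq W_1^{j}$ some vertex of $F$ is never visited again after $r$, so Lemma~\ref{lem_burden}.\ref{lem_burden1} applied to the $\{F\}$-burden $w\rho_0\cdots\rho_r$ bounds $\score_F$ by $2$. Together with the two previous paragraphs this establishes the bound at every position.

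I expect the last paragraph --- the tail with only finitely many change points --- to be the main obstacle, since Lemma~\ref{lem_cp} is stated only for a bounded interval between two change points and does not directly cover an infinite final segment. The delicate points there are arguing that the play is genuinely trapped in $W_1^{j}$ (so that the recursive strategy and its inductive hypothesis are in force) and that sets $F$ straddling the boundary of $W_1^{j}$ do not accumulate score; the second is precisely what Lemma~\ref{lem_burden}.\ref{lem_burden1} is designed to supply.
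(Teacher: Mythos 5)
Your proof is correct and follows essentially the same route as the paper's: split on whether there are finitely or infinitely many change points, propagate the burden property through change points via Lemma~\ref{lem_cp} (which also bounds the scores between consecutive change points), and handle the infinite tail via the inductive hypothesis on $\tau_j^{\text{R}}$ for $F\subseteq W_1^j$ together with the observation that sets meeting $W_1\setminus W_1^j$ cannot accumulate further score. Your treatment of the tail is in fact slightly more explicit than the paper's (which merely asserts the play becomes consistent with some recursive strategy and bounds the straddling sets by $1$), since you separately dispose of the case $c=\bot$ and argue via the attractor that the play is genuinely trapped in $W_1^j$, invoking Lemma~\ref{lem_burden}.\ref{lem_burden1} for the remaining sets.
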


\begin{proof}
Let $\rho=w^{-1}\play(wv, \sigma, \tau^{*})$. If $\rho$ contains
infinitely many change points, then Lemma~\ref{lem_cp} implies 
$\maxscore_{\F_0}(\play(wv, \sigma, \tau^{*})) \le 2$ as the play 
starts with a burden, i.e., there scores are bounded by $2$ in $wv$,
and in between any two
change points, the scores are bounded by $2$ as well. 
If $\rho$ contains only finitely many change points, then Lemma~\ref{lem_cp}
implies that the scores of Player~$0$ up to the last change point are
bounded by $2$. From that point onwards, $\play(wv, \sigma, \tau^{*})$ is
consistent with some $\tau_1^j$, and the play up to that point is a 
$\F_1\restriction W_1^j$-burden, as it is an $\F_1\restriction W_1$-burden
 due to Lemma~\ref{lem_cp}. 
Hence, the scores for every set $F\in \F_1\restriction W_1^j$
are bounded by $2$ from that point onwards, by the inductive hypothesis on $\tau_1^j$.
The scores of every $F\in\F_1\restriction W_1$ with $F\not\subseteq W_1^j$ are bounded
by $1$, as vertices not in $W_1^j$ are no longer visited.
\end{proof}
Finally, we can prove Lemma \ref{lem_k3}, which also completes the proof of
Theorem~\ref{thm_k3}.

\begin{proof} 
Theorem~\ref{thm_correctness} yields that algorithm~\ref{alg_z} is 
correct, i.e. the sets $W_i$ returned are indeed the winning regions of the 
players. We prove the following stronger statement by induction over the height
of $\ztree$: let $V$ be the vertex set of $G$.
\textit{Player~$i$ has a winning strategy $\sigma$ for her winning region $W_i$
in $G$ such that $\maxscore_{\F_{1-i}\restriction V}(wv,\sigma,\tau)\le 2$
 for every strategy $\tau\in\Pi_{1-i}$
 and every $\F_{1-i}\restriction V$-burden $wv$ in $G$.}
This implies Lemma~\ref{lem_k3}, as the finite play $v$ for every $v\in W_i$ 
is an $\F_{1-i}\restriction V$-burden.

For the induction start, apply Lemma~\ref{lem_leaf}. In the induction step, 
use the strategies obtained from the induction hypothesis to define $\sigma^*$ 
and $\tau^*$ as above and apply Lemma~\ref{lem_strip} respectively 
Lemma~\ref{lem_alt}. Both strategies are winning, as they bound the scores of the opponent by $2$.
\end{proof}

\section{Conclusion}
\label{sec_conc}

We have presented a criterion to stop plays in a Muller game after a
finite amount of time that preserves winning regions. Our bound $3^{|G|}$
on the length of a play improves the bound $|G|!+1$ obtained by a reduction to parity
games. Furthermore, our techniques show that the winning player can bound the scores
of the opponent by $2$ and that this bound $2$ is tight.

However, it remains open whether a play can also be stopped
after a score of $2$ is reached. As the winning player cannot
always avoid a score of $2$ for the opponent, one has to
show that the winning player always reaches a score of $2$ for
one of her sets before the opponent reaches score $2$ for one
of his sets. Our approach does not seem to be suitable for this, as 
the notion of a burden is not sufficient for this goal. Furthermore,
it is unclear how to strengthen the definition while still retaining Lemmata
corresponding to Lemma~\ref{lem_strip} and Lemma \ref{lem_alt}. 

A finite-time Muller game with threshold $k$ is a reachability game (in the unraveling
of the original arena up to depth at most $k^{|G|}$), which can be solved 
with simple algorithms. Another interesting direction for research is to
find a construction which turns a winning strategy for a finite-time Muller game
with threshold $3$ (or $2$, if it is equivalent) into a finite-state strategy 
for the original Muller game. It is conceivable that such a construction would yield
memory structures that are optimized for a given arena, something which does not hold
for the LAR respectively Zielonka tree structures.

\paragraph{Acknowledgements.}
The authors want to thank Wolfgang Thomas for bringing McNaughton's work
to their attention and Marcus Gelderie, Michael Holtmann, Marcin Jurdzi\'nski,
and J\"org Olschewski for fruitful discussions on the topic. Also, they want to
thank the referees for their helpful comments.

%%%%%%%%%%%%%%%%%%%%%%%%%%%%%%%%%%%%%%%%%%%%%%%%%%%%%%%%%%%%%%%%%%%%%%%
%%%%%%%%%%%%%%%%%%%%%%%%%%%%%%%%%%%%%%%%%%%%%%%%%%%%%%%%%%%%%%%%%%%%%%%
%%%%%%%%%%%%%%%%%%%%%%%%%%%%%%%%%%%%%%%%%%%%%%%%%%%%%%%%%%%%%%%%%%%%%%%

\bibliographystyle{eptcs}
\bibliography{biblio}{}

\end{document}